\newtoks\bsubfloattoks
\newdimen\bsubfloatht
\newcommand{\bsubfloat}[2][]{%
  \sbox\z@{#2}%
  \ifdim\bsubfloatht<\ht\z@
    \bsubfloatht=\ht\z@
  \fi
  \advance\bsubfloatcount\@ne
  \@namedef{bsubfloat\romannumeral\bsubfloatcount}{%
    \subfloat[#1]{\vbox to\bsubfloatht{\hbox{#2}\vfill}}}%
}
\newcommand{\resetbsubfloat}{\bsubfloatcount\z@\bsubfloatht=\z@}
\DeclareMathOperator{\diam}{diam}
\DeclareMathOperator{\td}{td}
\DeclareMathOperator{\tw}{tw}
\DeclareMathOperator{\pw}{pw}
\DeclareMathOperator{\bw}{bw}
\DeclareMathOperator{\rvc}{rvc}
\DeclareMathOperator{\srvc}{srvc}
\newcommand{\vect}[1]{\boldsymbol{\mathbf{#1}}}
\newcommand{\ProblemFormat}[1]{{\sc #1}}
\newcommand{\ProblemName}[1]{\ProblemFormat{#1}\xspace}
\newcommand{\occsat}[0]{\ProblemName{$3$-Occurrence 3-SAT}}
\newcommand{\threesat}[0]{\ProblemName{$3$-SAT}}
\newcommand{\probRc}{\ProblemName{Rainbow Connectivity}}
\newcommand{\probSrc}{\ProblemName{Strong Rainbow Connectivity}}
\newcommand{\probRvc}{\ProblemName{Rainbow Vertex Connectivity}}
\newcommand{\probSrvc}{\ProblemName{Strong Rainbow Vertex Connectivity}}
\newcommand{\probstRvc}{\ProblemName{Rainbow Vertex $st$-Connectivity}}
\newcommand{\probstSrvc}{\ProblemName{Strong Rainbow Vertex $st$-Connectivity}}
\newtheorem{theorem}{Theorem}
\newtheorem{lemma}[theorem]{Lemma}
\newtheorem{corollary}[theorem]{Corollary}
\newtheorem{observation}[theorem]{Observation}
\newenvironment{subproof}[1][\proofname]{%
  \begin{proof}[#1]%
}{%
  \end{proof}%
}
\newcommand{\newres}[0]{\raisebox{0.25ex}{$\bigstar$}}
\colorlet{tableheadcolor}{gray!25} 
\colorlet{tablerowcolor}{gray!20} 
\newcommand{\rowcol}{\rowcolor{tablerowcolor}} %
\title{Complexity of Rainbow Vertex Connectivity\\Problems for Restricted Graph Classes\thanks{Work partially supported by the Emil Aaltonen Foundation}}
\author{Juho Lauri\thanks{Tampere University of Technology, Finland. E-mail: \texttt{juho.lauri@tut.fi}}}
\date{\today}
\begin{document}
\maketitle

\begin{abstract}
A path in a vertex-colored graph $G$ is \emph{vertex rainbow} if all of its internal vertices have a distinct color. The graph $G$ is said to be \emph{rainbow vertex connected} if there is a vertex rainbow path between every pair of its vertices. Similarly, the graph $G$ is \emph{strongly rainbow vertex connected} if there is a shortest path which is vertex rainbow between every pair of its vertices. We consider the complexity of deciding if a given vertex-colored graph is rainbow or strongly rainbow vertex connected. We call these problems \probRvc and \probSrvc, respectively. We prove both problems remain $\NP$-complete on very restricted graph classes including bipartite planar graphs of maximum degree 3, interval graphs, and $k$-regular graphs for $k \geq 3$. We settle precisely the complexity of both problems from the viewpoint of two width parameters: pathwidth and tree-depth. More precisely, we show both problems remain $\NP$-complete for bounded pathwidth graphs, while being fixed-parameter tractable parameterized by tree-depth. Moreover, we show both problems are solvable in polynomial time for block graphs, while \probSrvc is tractable for cactus graphs and split graphs.

\smallskip
\noindent \textbf{Keywords:} rainbow connectivity, computational complexity
\end{abstract}

\section{Introduction}
Krivelevich and Yuster~\citep{Krivelevich2010} introduced the concept of rainbow vertex connectivity. A path in a vertex-colored graph $G$ is said to be \emph{vertex rainbow} if all of its internal vertices have a distinct color. The graph $G$ is said to be \emph{rainbow vertex connected} if there is a vertex rainbow path between every pair of its vertices. The minimum number of colors needed to make $G$ rainbow vertex connected is known as the \emph{rainbow vertex connection number}, and it is denoted by $\rvc(G)$. Recall the \emph{diameter} of a graph $G$, denoted by $\diam(G)$, is the length of a longest shortest path in $G$. It is easy to see two vertices $u$ and $v$ are rainbow vertex connected regardless of the underlying vertex-coloring if their distance $d(u,v)$ is at most~2. Thus, we have that $\rvc(G) \geq \diam(G)-1$, with equality if the diameter is 1 or 2. Similarly, an easy to see upper bound is $\rvc(G) \leq n-2$, as long as we disregard the singleton graph. In other words, complete graphs are precisely the graphs with rainbow vertex connection number 0; for all other graphs we require at least 1 color.

\citet{Li2014} introduced the strong variant of rainbow vertex connectivity. We say the vertex-colored graph $G$ is \emph{strongly rainbow vertex connected} if there is, between every pair of vertices, a shortest path whose internal vertices have a distinct color. The minimum number of colors needed to make $G$ strongly rainbow vertex connected is known as the \emph{strong rainbow vertex connection number}, and it is denoted by $\srvc(G)$. As each strong vertex rainbow coloring is also a rainbow vertex coloring, we have that $\diam(G)-1 \leq \rvc(G) \leq \srvc(G) \leq n-2$.

Prior to the work of Krivelevich and Yuster~\citep{Krivelevich2010}, the concept of rainbow connectivity (for edge-colored graphs) was introduced by~\citet{Chartrand2008} as an interesting way to strengthen the connectivity property. Indeed, the notion has proven to be useful in the domain of networking~\citep{Chakraborty2009} and anonymous communication~\citep{Dorbec2014}. Rainbow coloring and connectivity problems have been subject to considerable interest and research during the past years. For additional applications, we refer the reader to the survey~\citep{Li2012}. A comprehensive introduction is also provided by the books~\citep{Li2012b,Chartrand2008b}.

It is computationally difficult to determine either $\rvc(G)$ or $\srvc(G)$ for a given graph $G$. Indeed, through the work of~\citet{Chen2011} and~\citet{Chen2013} it is known that deciding if $\rvc(G) \leq k$ is $\NP$-complete for every $k \geq 2$. Likewise,~\citet{Eiben2015} showed deciding if $\srvc(G) \leq k$ is $\NP$-complete for every $k \geq 3$. In the same paper, the authors also proved that the strong rainbow vertex connection number of an $n$-vertex graph of bounded diameter cannot be approximated within a factor of $n^{1/2-\epsilon}$, for any $\epsilon > 0$, unless $\P = \NP$. Given such strong intractability results, it is interesting to ask whether the following problem is easier. 
\begin{framed}
\noindent \probRvc (\textsc{Rvc}) \\
\textbf{Instance:} A connected undirected graph $G=(V,E)$, and a vertex-coloring $\psi : V \to C$, where $C$ is a set of colors \\ 
\textbf{Question:} Is $G$ rainbow vertex connected under $\psi$?
\end{framed}
\noindent However, \probRvc was shown to be $\NP$-complete by~\citet{Chen2011}. Later on,~\citet{Huang2014} showed the problem remains $\NP$-complete even when the input graph is a line graph. A more systematic study into the complexity of \probRvc was performed by~\citet{Uchizawa2013}. They proved the problem remains $\NP$-complete for both series-parallel graphs, and graphs of bounded diameter. In contrast, they showed the problem is in $\P$ for outerplanar graphs. Furthermore, they showed the problem is fixed-parameter tractable for the $n$-vertex $m$-edge general graph parameterized by the number of colors in the vertex-coloring. That is, they gave an algorithm running in time $O(k2^kmn)$ such that given a graph vertex-colored with $k$ colors, it decides whether $G$ is rainbow vertex connected.

We mention two related problems, defined on edge-colored undirected graphs. A path in an edge-colored graph $H$ is \emph{rainbow} if no two edges of it are colored the same. The graph $H$ is said to be \emph{rainbow connected} if there is a rainbow path between every pair of its vertices. Likewise, the graph $H$ is said to be \emph{strongly rainbow connected} if there is a shortest path which is rainbow between every pair of its vertices. Formally, the two problems are defined as follows.
\begin{framed}
\noindent \probRc (\textsc{Rc}) \\
\textbf{Instance:} A connected undirected graph $H=(V,E)$, and an edge-coloring $\zeta : E \to C$, where $C$ is a set of colors \\ 
\textbf{Question:} Is $H$ rainbow connected under $\zeta$?
\end{framed}
\begin{framed}
\noindent \probSrc (\textsc{Src}) \\
\textbf{Instance:} A connected undirected graph $H=(V,E)$, and an edge-coloring $\zeta : E \to C$, where $C$ is a set of colors \\ 
\textbf{Question:} Is $H$ strongly rainbow connected under $\zeta$?
\end{framed}
\noindent It was shown by~\citet{Chakraborty2009} that \probRc is $\NP$-complete. Later on, the complexity of both edge variants was studied by~\citet{Uchizawa2013}. For instance, the authors showed both problems remain $\NP$-complete for outerplanar graphs, and that \probRc is $\NP$-complete already on graphs of diameter~2. A further study into the complexity of the edge variant problems was done in our earlier work~\citep{Lauri2015}. For instance, it was shown that both problems remain $\NP$-complete on interval outerplanar graphs, $k$-regular graphs for $k \geq 3$, and on graphs of bounded pathwidth. In addition, block graphs were identified as a class for which the complexity of the two problems \probRc and \probSrc differ. Indeed, it was shown that for block graphs, \probRc is $\NP$-complete, while \probSrc is in $\P$.

In this paper, we introduce as a natural variant of \probRvc the following problem.
\begin{framed}
\noindent \probSrvc (\textsc{Srvc}) \\
\textbf{Instance:} A connected undirected graph $G=(V,E)$, and a vertex-coloring $\psi : V \to C$, where $C$ is a set of colors \\ 
\textbf{Question:} Is $G$ strongly rainbow vertex connected under $\psi$?
\end{framed}
\noindent We present several new complexity results for both \probRvc and \probSrvc.
\begin{itemize}
\item In Section~\ref{sec:hardness_results}, we focus on negative results. In particular, we prove both problems remain $\NP$-complete for bipartite planar graphs of maximum degree~3 (Subsection~\ref{sec:bip}), interval graphs (Subsection~\ref{sec:interval}), triangle-free cubic graphs (Subsection~\ref{sec:cubic}), and $k$-regular graphs for~$k \geq 4$ (Subsection~\ref{sec:regular}). 

\item In Section~\ref{sec:tractability_considerations}, we show both problems are solvable in polynomial time when restricted to the class of block graphs. Furthermore, we extend the algorithm of~\citet{Uchizawa2013} for deciding \probRvc on cactus graphs to decide \probSrvc for the same graph class.

\item In Subsection~\ref{sec:para_consequences}, we consider the implications of our constructions of Section~\ref{sec:hardness_results} for parameterized complexity. For instance, we remark both problems remain $\NP$-complete on graphs of pathwidth $p$, where $p \geq 3$, and also on graphs of bandwidth $b$, where $b \geq 3$. For positive results, we show \probSrvc is $\FPT$ parameterized by the diameter of the input graph, implying polynomial-time solvability for the class of split graphs. Moreover, exploiting known results on tree-depth, we observe all four problems investigated are $\FPT$ parameterized by tree-depth.
\end{itemize}

\section{Preliminaries}
\label{sec:preliminaries}
All graphs we consider in this work are simple, undirected, and finite. We begin by defining the graph classes we consider in this work, along with some terminology and graph invariants. For graph-theoretic concepts not defined here, we refer the reader to~\citep{Diestel2005}. For an integer $n$, we write $[n] = \{1,2,\ldots,n\}$.

A \emph{coloring} of a graph $G$ is an assignment of colors to the vertices of $G$ such that no two adjacent vertices receive the same color. A graph $G$ is said to be \emph{$k$-colorable} if there exists a coloring using $k$ colors for it. A 2-colorable graph is \emph{bipartite}. A \emph{complete graph} on $n$ vertices, denoted by $K_n$, has all the possible ${n \choose 2}$ edges. In particular, we will call $K_3$ a \emph{triangle}. A \emph{complete bipartite graph} consists of two non-empty independent sets $X$ and $Y$ with $(x,y)$ being an edge whenever $x \in X$ and $y \in Y$. A complete bipartite graph is denoted by $K_{n,m}$, and it has $n+m = |X|+|Y|$ vertices. In particular, we will call $K_{1,3}$ a \emph{claw}. A complete subgraph of $G$ is a \emph{clique}. The \emph{clique number} of a graph $G$, denoted by $\omega(G)$, is the size of a largest clique in $G$.

A graph is said to be \emph{planar} if it can be embedded in the plane with no crossing edges. Equivalently, a graph is planar if it is $(K_{3,3},K_5)$-minor-free. A graph is \emph{outerplanar} if it has a crossing-free embedding in the plane such that all vertices are on the same face. Clearly, each outerplanar graph is planar. Another superclass of outerplanar graphs is formed by \emph{series-parallel graphs}. Series-parallel graphs are exactly the $K_4$-minor-free graphs~\citep{Duffin1965}. In a \emph{cactus graph}, every edge is in at most one cycle. Cactus graphs form a subclass of outerplanar graphs.

A \emph{chord} is an edge joining two non-consecutive vertices in a cycle. A graph is \emph{chordal} if every cycle of length 4 or more has a chord. Equivalently, a graph is chordal if it contains no induced cycle of length 4 or more. Chordal graphs are precisely the class of graphs admitting a \emph{clique tree}~\citep{Gavril1974}. A clique tree of a connected chordal graph $G$ is any tree $T$ whose vertices are the maximal cliques of $G$ such that for every two maximal cliques $C_i,C_j$, each clique on the path from $C_i$ to $C_j$ in $T$ contains $C_i \cap C_j$. A subclass of chordal graphs is formed by \emph{interval graphs}. A graph is an interval graph if and only if it admits a clique tree that is path~\citep{Gilmore1964}. A \emph{cut vertex} is a vertex whose removal will disconnect the graph. A \emph{biconnected graph} is a connected graph with no cut vertices. In a \emph{block graph}, every maximal biconnected component, known as a \emph{block}, is a clique. In other words, every edge of a block graph $G$ lies in a unique block, and $G$ is the union of its blocks. It is easy to see that block graphs are also chordal. An \emph{independent set} in a graph is a set of pairwise non-adjacent vertices. A graph whose vertex set can be partitioned into a clique and an independent set is known as a \emph{split graph}. It is easy to see that a split graph is chordal.

The \emph{degree} of a vertex $v$ is the number of edges incident to $v$. A graph is \emph{$k$-regular} if every vertex has degree exactly $k$. In particular, we will call a 3-regular graph \emph{cubic}. A connected 2-regular graph is a \emph{cycle graph}. A cycle graph on $n$ vertices is denoted by $C_n$.

A \emph{proper interval graph} is a graph that is both interval and claw-free (see~\citep{Roberts1969}). The \emph{bandwidth} of a graph~$G$, denoted by $\bw(G)$, is one less than the minimum clique number of any proper interval graph having $G$ as a subgraph~\citep{Kaplan1996}. The \emph{pathwidth} of a graph~$G$, denoted by $\pw(G)$, is one less than the minimum clique number of any interval graph having $G$ as a subgraph. The \emph{treewidth} of a graph~$G$, denoted by $\tw(G)$, is one less than the minimum clique number of any chordal graph having $G$ as a subgraph. Indeed, for a graph $G$, we have that $\tw(G) \leq \pw(G) \leq \bw(G)$ (for a proof, see~\citep{Bodlaender1998}). Finally, a $(C_4,P_4)$-free graph is \emph{trivially perfect}. The \emph{tree-depth} of a graph $G$, denoted by $\td(G)$, is the minimum clique number of any trivially perfect graph having $G$ as a subgraph. Here, we have that $\pw(G) \leq \td(G)-1$ (for a proof, see~\citep{Bodlaender1995}).

Finally, we say a problem is \emph{fixed-parameter tractable} (FPT) if it can be solved in time $f(k) \cdot n^{O(1)}$, where $f$ is some computable function depending solely on some parameter $k$, and $n$ is the input size. Similarly, a problem is said to be in $\XP$ if it can be solved in $n^{f(k)}$ time. For a more comprehensive treatment on parameterized complexity, we refer the reader to the books~\citep{Downey2013,Cygan2015}.

\section{Hardness results}
\label{sec:hardness_results}
In this section, we will give a number of hardness results for both \probRvc and \probSrvc for very restricted graph classes. It is interesting to compare the obtained complexity results against those of the edge variants, namely \probRc and \probSrc. Indeed, we summarize the known complexity results for all four variants in Table~\ref{tbl:hardness_summary} along with our new results.

\begin{table}[t]
\caption{Complexity results for rainbow connectivity problems along with some of our new results marked by $\bigstar$. The symbol $\dagger$ stands for~\cite{Uchizawa2013} and the symbol $\ddagger$ for~\cite{Lauri2015}.}
\label{tbl:hardness_summary}
\centering
\def\arraystretch{1}
\setlength{\tabcolsep}{10pt}
\begin{tabular}{lllll}
\toprule 
Graph class 				& \textsc{Rvc} & \textsc{Srvc} & \textsc{Rc} & \textsc{Src} \\ 
\midrule
Block 						& $\P$ \newres & $\P$ \newres & $\NPC$ $\ddagger$ & $\P$ $\ddagger$ \\
Bounded bandwidth & $\NPC$ \newres & $\NPC$ \newres & $\NPC$ $\ddagger$ & $\NPC$ $\ddagger$ \\
Bounded diameter			& $\NPC$ $\dagger$ & $\FPT$ \newres & $\NPC$ $\dagger$ & $\FPT$ \newres \\ 
\rowcol Bounded pathwidth & $\NPC$ \newres & $\NPC$ \newres & $\NPC$ $\ddagger$ & $\NPC$ $\ddagger$ \\
\rowcol Bounded tree-depth & $\FPT$ \newres & $\FPT$ \newres & $\FPT$ \newres & $\FPT$ \newres \\
\rowcol Cactus 						& $\P$ $\dagger$ & $\P$ \newres & $\P$ $\dagger$ & $\P$ $\dagger$ \\
Interval					& $\NPC$ \newres & $\NPC$ \newres & $\NPC$ $\ddagger$ & $\NPC$ $\ddagger$ \\
$k$-regular, $k \geq 3$		& $\NPC$ \newres & $\NPC$ \newres & $\NPC$ $\ddagger$ & $\NPC$ $\ddagger$ \\
Outerplanar 				& $\P$ $\dagger$ & ? & $\NPC$ $\dagger$ & $\NPC$ $\dagger$ \\
\rowcol Series-parallel 			& $\NPC$ $\dagger$ & $\NPC$ \newres & $\NPC$ $\dagger$ & $\NPC$ $\dagger$ \\
\rowcol Split						& ? & $\P$ \newres & ? & $\P$ \newres \\
\rowcol Tree 						& $\P$ & $\P$ & $\P$ & $\P$ \\
\bottomrule 
\end{tabular}
\end{table}

\subsection{Overview of the reductions}
\label{sec:overview}
In this subsection, we give an overview of our reductions. Let us remark that all of the four problems considered are in $\NP$ with the certificate being a set of colored paths, one path for each pair of vertices. All of our reductions are from the \occsat problem, which is a variant of the classical \threesat problem. In the \occsat problem, we have a restriction that every variable occurs at most three times, and each clause has at most 3 literals. The problem is known to be $\NP$-complete~\citep{Papadimitriou1994}.

All of our reductions are greatly inspired by those of~\citet{Uchizawa2013}, who gave hardness results for both \probRc and \probRvc. Let us explain the gist of their reduction on a high-level. Given a \occsat formula $\phi$, a variable gadget is constructed for each variable, and a clause gadget is built for each clause. Moreover, a certain vertex-coloring is constructed for each gadget. A key idea is that regardless of the satisfiability of $\phi$, every vertex pair in a gadget is rainbow (vertex) connected. Moreover, regardless of the satisfiability of $\phi$, the whole graph will be rainbow (vertex) connected except for a specific vertex pair $s$ and $t$. Informally, the gadgets are set up in a path-like manner, and the special vertices $s$ and $t$ act as endpoints of this path-like graph. The idea is illustrated in Figure~\ref{fig:planar_bip_construction} (the corresponding construction is given in Theorem~\ref{thm:rvc_bip_planar}).

Clearly, a strongly rainbow (vertex) connected graph is also rainbow (vertex) connected. Therefore, it is desirable to construct the gadgets such that each vertex pair is always (regardless of $\phi$) connected by a rainbow (vertex) shortest path. This allows one to obtain a hardness result for the strong problem variant as well. Indeed, the first hardness result we present (Theorem~\ref{thm:rvc_bip_planar}) will be of this flavor. However, we are not always able to do this, or doing so will overly complicate the construction in question. We will always explicitly mark whether or not this is the case, i.e., if a hardness result for the strong variant follows as well.

Finally, for the sake of presentation, all of our constructions assume the given \occsat formula $\phi$ only has clauses with exactly 3 literals. However, as shown by~\citet{Tovey1984}, every such instance is satisfiable. Therefore, we will present clause gadgets corresponding to clauses of size~2 in the appendix for each graph class (note that clauses of size~1 can be safely removed by unit propagation).

\subsection{Bipartite planar graphs}
\label{sec:bip}
In this subsection, we will prove that both \probRvc and \probSrvc remain $\NP$-complete on bipartite planar graphs of maximum degree 3. We remark that this is a very restricted graph class, generalizing the class of bipartite claw-free graphs. A bipartite claw-free graph consists of disjoint cycles and paths. It is easy to see both \probRvc and \probSrc are solvable in polynomial time for the class of bipartite claw-free graphs.
 
\begin{theorem}
\label{thm:rvc_bip_planar}
\probRvc is $\NP$-complete when restricted to the class of bipartite planar graphs of maximum degree 3.
\end{theorem}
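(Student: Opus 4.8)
The plan is to establish membership in $\NP$ and then give a polynomial-time many-one reduction from \occsat, following the template sketched in Subsection~\ref{sec:overview}. Membership is immediate: a witness consists of one vertex-rainbow path for each of the $\binom{|V|}{2}$ vertex pairs, and verifying that a given path has pairwise distinct internal colors takes linear time. For hardness, let $\phi$ be an \occsat instance with variables $x_1,\ldots,x_n$ and clauses $C_1,\ldots,C_m$, each clause having exactly three literals (clauses of size two are dealt with by a separate gadget deferred to the appendix, as noted above). I would build a graph $G_\phi$ together with a vertex-coloring $\psi$ such that $G_\phi$ is rainbow vertex connected under $\psi$ if and only if $\phi$ is satisfiable, and so that $G_\phi$ is bipartite, planar, and subcubic.

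The skeleton of $G_\phi$ is a ``path of gadgets'' whose two extreme vertices are the designated pair $s$ and $t$: one variable gadget $X_i$ per variable and one clause gadget $D_j$ per clause, strung along a backbone so that, regardless of $\phi$, \emph{every} pair of vertices other than $(s,t)$ is rainbow vertex connected. Each variable gadget $X_i$ offers two internally disjoint sub-routes through it, a ``true branch'' and a ``false branch'', and its coloring is arranged so that any would-be rainbow $s$--$t$ path must commit to exactly one of them; this choice encodes the truth value of $x_i$. Each clause gadget $D_j$ is wired (via subdivided edges, both to keep degrees down and to repair cycle parities) to the branches of the three literals of $C_j$ and is colored with colors shared with those branches, so that $D_j$ can be crossed without repeating a color precisely when at least one incident branch was chosen ``true''. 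Keeping the gadgets locally colorful and short makes all intra-gadget and inter-gadget pairs rainbow vertex connected unconditionally, so the whole question collapses to whether some $s$--$t$ path avoids a repeated internal color.

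Correctness then has two directions. If $\alpha$ satisfies $\phi$, I would exhibit the $s$--$t$ path that picks the $\alpha(x_i)$-branch of each $X_i$ and routes through each $D_j$ via a literal that $\alpha$ makes true; by construction the shared colors are used at most once along this route, so it is vertex rainbow, and with the unconditional connectivity of all other pairs this shows $G_\phi$ is rainbow vertex connected. Conversely, suppose no $s$--$t$ path is vertex rainbow. Any $s$--$t$ path selects one branch per variable gadget — hence an assignment $\alpha$ — and must traverse every clause gadget; if $\alpha$ falsified some $C_j$, the coloring of $D_j$ forces reuse of a color already spent on its incident branches, so no $s$--$t$ path could have been rainbow unless $\alpha$ satisfied $\phi$; hence $\phi$ is satisfiable. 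Finally I would check the structural constraints: an explicit comb-like planar drawing (each variable occurs at most three times, so $X_i$ emits at most three ``teeth''), a parity computation fixing the lengths of all connecting paths by subdivision to destroy odd cycles, and a local inspection verifying that every vertex — backbone junctions and gadget ports included — is incident to at most three edges. Since the same construction is meant to transfer to \probSrvc, I would additionally choose branch lengths so that the certifying $s$--$t$ path above is a \emph{shortest} $s$--$t$ path.

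The step I expect to be the main obstacle is enforcing \emph{maximum degree three and planarity simultaneously}. The natural variable and clause gadgets in reductions of this kind rely on high-degree hub vertices or many parallel branches; replacing hubs by small binary-tree-like structures and serialising parallel branches tends to introduce shortcuts, new short $s$--$t$ routes, and new cycles whose parity must be controlled, all without disturbing the color-sharing bookkeeping. Tuning $\psi$ so that the \emph{only} failure of rainbow vertex connectivity is the pair $(s,t)$, that it fails exactly on unsatisfiable $\phi$, and that $G_\phi$ nonetheless stays bipartite, planar, and subcubic (and, for the strong variant, with the witness path shortest) is where the bulk of the case analysis will lie.
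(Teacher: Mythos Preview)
Your high-level plan matches the paper: reduce from \occsat, build a ``path of gadgets'' with distinguished endpoints, arrange that all pairs except the endpoints are unconditionally rainbow vertex connected, and have the endpoint pair encode satisfiability. Membership in $\NP$ and the two directions of correctness are also as in the paper.

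The genuine gap is in how you link clauses to variables. You propose to \emph{wire} each clause gadget to the branches of its three literals via subdivided edges, and you correctly flag planarity plus subcubicity as the main obstacle. But this obstacle is real: the variable--clause incidence of a \occsat instance need not be planar (for instance $K_{3,3}$ already arises), so a comb-like drawing with teeth from variable gadgets to clause gadgets cannot in general be made crossing-free. The paper sidesteps this entirely. There are \emph{no} edges between clause gadgets and variable gadgets; the interaction is carried \emph{solely} by shared colors. Concretely, all variable gadgets $X_1,\ldots,X_n$ are laid out first along the backbone, then all clause gadgets $C_1,\ldots,C_m$, giving a single series-composed ribbon; each clause gadget contains three internal vertices $w_{j,1},w_{j,2},w_{j,3}$ whose colors are reused from the appropriate variable-gadget branches. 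An additional tail path $s_0,s_1,\ldots,s_m$ colored with the clause colors $c_1,\ldots,c_m$ forces any rainbow $s_0$--$t$ path to use one of the $w_{j,\ell}$ in each $C_j$ (rather than bypassing them along the outer cycle). With this design, planarity, bipartiteness, and maximum degree~$3$ are immediate from the local pictures of $X_i$ and $C_j$, and no global crossing argument is needed. Once you drop the physical wiring and rely only on color-sharing, the obstacle you anticipated disappears.
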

\textbf{Construction}: Given a \occsat formula $\phi = \bigwedge_{j=1}^{m} c_i$ over variables $x_1,x_2,\ldots,x_n$, we construct a graph $G_\phi$ and a vertex-coloring $\psi$ such that $\phi$ is satisfiable if and only if $G_\phi$ is rainbow vertex connected under $\psi$. We first describe the construction of $G_\phi$, and then the vertex-coloring $\psi$ of $G_\phi$.

\begin{figure}
\bsubfloat[]{%
  \includegraphics[scale=1]{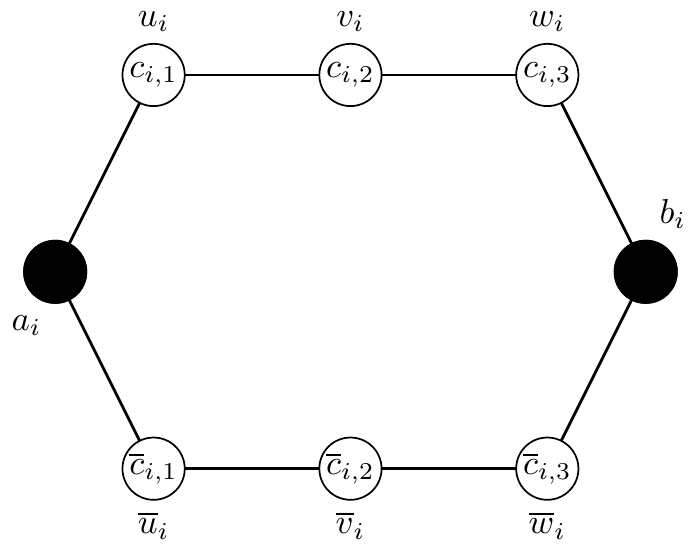}%
}
\bsubfloat[]{%
  \includegraphics[scale=1]{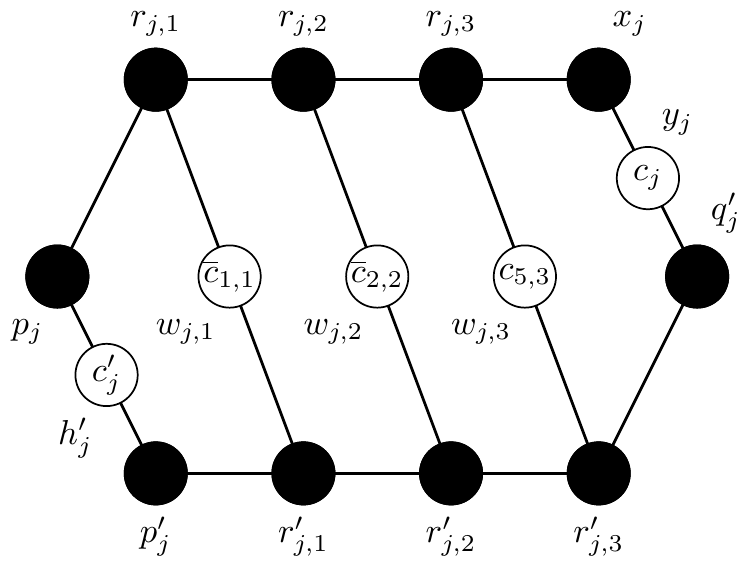}%
}
\bsubfloati\qquad\bsubfloatii
\caption{\textbf{(a)} A variable gadget $X_i$ for the variable $x_i$, and \textbf{(b)} a clause gadget $C_j$ for the clause $c_j = (x_1 \vee x_2 \vee \neg x_5)$, where $x_1$ is the first literal of $x_1$, $x_2$ is the second literal of $x_2$, and $\neg x_5$ is the third literal of $x_5$. The vertices receiving fresh distinct colors are drawn as solid circles.}
\label{fig:planar_bip_gadgets}
\end{figure}

We will construct for each variable $x_i$, where $i \in [n]$, a \emph{variable gagdet} $X_i$. A variable gadget $X_i$ is the cycle graph $C_8$ embedded in the plane on the vertices $a_i$, $u_i$, $v_i$, $w_i$, $b_i$, $\overline{w}_i$, $\overline{v}_i$, $\overline{u}_i$ in clockwise order. For each clause $c_j$, where $j \in [m]$, we construct a \emph{clause gadget} $C_j$. A clause gadget $C_j$ is built by starting from the cycle graph $C_{12}$ embedded in the plane on the vertices $p_j$, $r_{j,1}$, $r_{j,2}$, $r_{j,3}$, $x_j$, $y_j$, $q'_j$, $r'_{j,3}$, $r'_{j,2}$, $r'_{j,1}$, $p'_j$, and $h'_j$ in clockwise order, and by adding chords $(r_{j,1},r'_{j,1})$, $(r_{j,2},r'_{j,2})$, and $(r_{j,3},r'_{j,3})$. For $\ell \in [3]$, the added chord $(r_{j,\ell},r'_{j,\ell})$ is subdivided by a new vertex $w_{j,\ell}$. The vertices $w_{j,\ell}$ correspond to the three literals the clause $c_j$ has. Both a variable gadget and a clause gadget are shown in Figure~\ref{fig:planar_bip_gadgets}.


For each $1 \leq i < n$, we connect $X_i$ with $X_{i+1}$ by adding a new vertex $d_i$ along with two edges $(b_i,d_i)$ and $(d_i,a_{i+1})$. Similarly, we connect $C_j$ with $C_{j+1}$ by adding a new vertex $f_j$ along with two edges $(q'_j,f_j)$ and $(f_j,p_{j+1})$ for each $1 \leq j < m$. The two components are connected together by adding the vertex $d_n$ with the edges $(b_n,d_n)$ and $(d_n,p_1)$. We then add two vertices $t'$ and $t$ along with the edges $(q'_m,t')$ and $(t',t)$. Finally, we construct a path of length $m+1$ on vertices $s_0,s_1,\ldots,s_m$, and connect it with $G_\phi$ by adding the edge $(s_m,a_1)$. This completes the construction of $G_\phi$. We can verify $G_\phi$ is indeed a bipartite planar graph of maximum degree 3.


We then describe the vertex-coloring $\psi$ given to the vertices of $G_\phi$. Observe that in a variable gadget $X_i$, there are precisely two paths between $a_i$ and $b_i$. Intuitively, taking the path from $a_i$ to $b_i$ through $u_i$, $v_i$, and $w_i$ corresponds to setting $x_i = 1$ in the formula $\phi$; we refer to this path as the \emph{positive $X_i$ path}. We color the three vertices $u_i$, $v_i$, and $w_i$ with colors $c_{i,1}, c_{i,2}$, and $c_{i,3}$, respectively. Taking the path from $a_i$ to $b_i$ through $\overline{u}_i$, $\overline{v}_i$, and $\overline{w}_i$ corresponds to setting $x_i = 0$ in the formula $\phi$; we refer to this path as the \emph{negative $X_i$ path}. The three vertices $\overline{u}_i$, $\overline{v}_i$, and $\overline{w}_i$ receive colors $\overline{c}_{i,1},\overline{c}_{i,2}$ and $\overline{c}_{i,3}$, respectively. The coloring of a variable gadget $X_i$ is illustrated in Figure~\ref{fig:planar_bip_gadgets} (a).

\begin{figure}[t]
\includegraphics[scale=0.74]{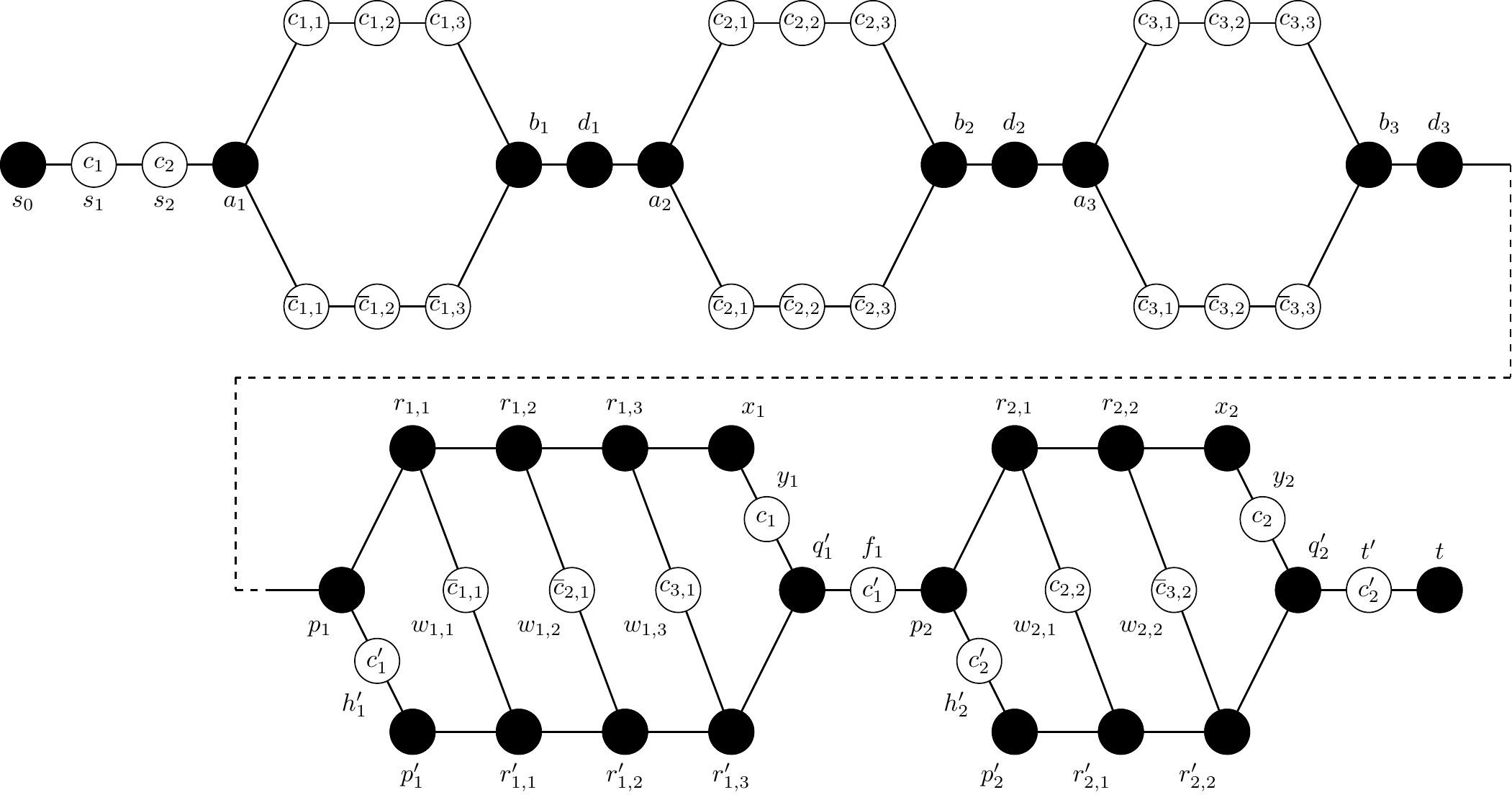}
\caption{A planar bipartite graph $G_\phi$ of maximum degree~3 constructed for the formula $\phi = (x_1 \vee x_2 \vee \neg x_3) \wedge (\neg x_2 \vee x_3)$. For brevity, some vertex labels are not shown.}
\label{fig:planar_bip_construction}
\end{figure}


Recall that a variable $x_i$ appears at most three times in $\phi$. We refer to the first occurrence of $x_i$ as the \emph{first literal of $x_i$}, the second occurrence of $x_i$ as the \emph{second literal of $x_i$}, and finally the third occurrence of $x_i$ as the \emph{third literal of $x_i$}. If a clause has two or three literals of a same variable, the tie is broken arbitrarily. In a clause gadget $C_j$, we color vertex $h'_j$ with color $c_j'$, and vertex $y_j$ with color $c_j$. For each $k \in [3]$, we denote the $k$th literal in the $j$th clause by $l_{j,k}$. We color vertex $w_{j,\ell}$ as follows:
\[
 \psi(w_{j,\ell}) = 
 \begin{dcases*}
        \overline{c}_{i,1} 	& if $l_{j,k}$ is a positive literal and the first literal of $x_i$ \\
        \overline{c}_{i,2} 	& if $l_{j,k}$ is a positive literal and the second literal of $x_i$ \\
        \overline{c}_{i,3} 	& if $l_{j,k}$ is a positive literal and the third literal of $x_i$ \\
        c_{i,1} 		& if $l_{j,k}$ is a negative literal and the first literal of $x_i$ \\
        c_{i,2} 		& if $l_{j,k}$ is a negative literal and the second literal of $x_i$ \\
        c_{i,3} 		& if $l_{j,k}$ is a negative literal and the third literal of $x_i$
  \end{dcases*}
\]
The vertex $f_j$, for each $1 \leq j < m$, receives color $c_j'$, while vertex $t'$ is colored with $c'_m$. The coloring of a clause gadget $C_j$ is shown in Figure~\ref{fig:planar_bip_gadgets} (b).


Finally, for each $1 \leq j \leq m$, we color vertex $s_j$ with color $c_j$. Every other uncolored vertex of $G_\phi$ receives a fresh new color that does not appear in $G_\phi$. Formally, these are precisely the vertices in 
\begin{equation*}
\begin{split}
U &= \{ a_i,b_i,d_i \mid 1 \leq i \leq n \} \\
\quad &\cup \{ p_j, r_{j,1}, r_{j,2}, r_{j,3}, x_j, q'_j, r'_{j,3}, r'_{j,2}, r'_{j,1}, p'_j \mid 1 \leq j \leq m \} \\
\quad &\cup \{ s_0,t \}.
\end{split}
\end{equation*}
Vertices in $U$ shown in Figure~\ref{fig:planar_bip_gadgets} are drawn as solid circles. This completes the vertex-coloring $\psi$ of $G_\phi$. An example is shown in Figure~\ref{fig:planar_bip_construction}. The proof of Theorem~\ref{thm:rvc_bip_planar} is obtained via the following two lemmas, which also make precise the intuition provided in Section~\ref{sec:overview}. The arguments essentially follow from~\citep{Uchizawa2013}, but we describe them for completeness. The reader should observe the two following lemmas prove a slightly stronger statement than necessary, by talking about \emph{strong} rainbow vertex connectedness instead of rainbow vertex connectedness.

\begin{lemma}
\label{lem:g_srvc_iff_st_path}
The graph $G_\phi$ is strongly rainbow vertex connected under the vertex-coloring $\psi$ if and only if $G_\phi$ has a vertex rainbow shortest path between the vertices $s_0$ and $t$.
\end{lemma}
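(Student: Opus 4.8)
The plan is to prove the two directions of the equivalence. The forward direction is trivial: if $G_\phi$ is strongly rainbow vertex connected under $\psi$, then in particular there is a vertex rainbow shortest path between $s_0$ and $t$. The substance of the lemma is the converse, and the strategy is to show that every vertex pair \emph{other than} $(s_0,t)$ is strongly rainbow vertex connected regardless of $\phi$ — indeed, regardless of the choices encoded in the variable gadgets. Since $G_\phi$ is assembled from a linearly-arranged family of gadgets (the variable gadgets $X_1,\ldots,X_n$, then the clause gadgets $C_1,\ldots,C_m$, joined by the connector vertices $d_i$ and $f_j$, with the pendant path $s_0,\ldots,s_m$ on one end and $t',t$ on the other), I would organize the case analysis along the ``position'' of the two vertices in this linear layout.

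First I would establish, gadget by gadget, that any two vertices lying inside a single gadget (or a gadget together with its adjacent connector vertices) have a rainbow \emph{shortest} path. Inside a variable gadget $X_i$, which is a $C_8$, the only pairs whose shortest path has two or more internal vertices are $(a_i,b_i)$ and the ``near-antipodal'' pairs; for $(a_i,b_i)$ one of the two length-$4$ paths works (its three internal vertices are colored with the three distinct colors $c_{i,1},c_{i,2},c_{i,3}$ or their barred counterparts), and for all other pairs the shortest path has at most one internal vertex, hence is automatically rainbow. Inside a clause gadget $C_j$ one checks the $C_{12}$-with-chords structure: the chords (subdivided by the literal vertices $w_{j,\ell}$) provide short routes, and again most pairs have at most one internal vertex on a shortest path while the handful of longer ones are covered by a path whose internal vertices all carry fresh distinct colors (the solid-circle vertices in $U$) plus at most one literal vertex. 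The key observation making all of this go through is that the vertices in $U$ receive pairwise-distinct fresh colors not appearing elsewhere, so any path all of whose internal vertices except possibly one lie in $U$ is automatically rainbow.

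Next I would handle pairs lying in \emph{different} gadgets. Here the point is that the ``backbone'' route threading through the middle of each gadget uses only $U$-vertices as internal vertices, except that (i) in each traversed variable gadget one must pick either the positive or the negative $X_i$-path, contributing three colored internal vertices, and (ii) in each traversed clause gadget one passes three colored vertices. Crucially, the backbone route between two vertices that are not $s_0$ and $t$ is a shortest path that traverses at most one variable gadget's colored interior and avoids repeating colors — because distinct gadgets use disjoint color sets (the $c_{i,\ell}$, $\overline c_{i,\ell}$ for $X_i$; the $c_j,c_j'$ for $C_j$) — and the pendant path $s_0,\ldots,s_m$ only becomes relevant when one endpoint is $s_0$. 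For every such pair one exhibits the shortest path explicitly and checks no color repeats. This reduces the whole statement to the single pair $(s_0,t)$: if $G_\phi$ has \emph{some} vertex rainbow shortest path between $s_0$ and $t$, then combined with the above, every pair is strongly rainbow vertex connected, so $G_\phi$ is strongly rainbow vertex connected.

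The main obstacle is the bookkeeping in the inter-gadget case: one must verify that the canonical shortest path between two far-apart vertices really is a shortest path (the chords inside clause gadgets and the fact that each gadget is an even cycle make the distance computations slightly delicate) and that the only colors appearing twice along \emph{any} shortest path would have to come from the $s_j$ vertices (colored $c_j$) clashing with the $y_j$ vertices (also colored $c_j$) or from the connector vertices $f_j,t'$ (colored $c_j'$) clashing with $h_j'$ (colored $c_j'$) — and to see that these clashes are exactly what forces the difficulty onto the $(s_0,t)$ pair and nowhere else. Once one checks that for any pair other than $(s_0,t)$ a shortest path can be chosen avoiding such a clash, the lemma follows. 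I expect this to be a careful but routine verification, which is presumably why the authors say the argument ``essentially follows from~\citep{Uchizawa2013}.''
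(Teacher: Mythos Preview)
Your strategy contains a genuine gap. You propose to show that every pair other than $(s_0,t)$ is strongly rainbow vertex connected \emph{regardless of $\phi$}, but this is false: many pairs with one endpoint on the $s$-path and the other deep in the clause region share exactly the same obstruction as $(s_0,t)$. Take for instance $(s_0,p_m)$ with $m\ge 2$. Every shortest $s_0$-$p_m$ path has $s_1,\ldots,s_m$ (colors $c_1,\ldots,c_m$) and $f_1,\ldots,f_{m-1}$ (colors $c'_1,\ldots,c'_{m-1}$) as internal vertices, and must traverse each $C_j$ for $j\le m-1$ from $p_j$ to $q'_j$. Inside such a $C_j$ a shortest $p_j$-$q'_j$ subpath either runs clockwise through $y_j$ (color $c_j$, clashing with $s_j$), counterclockwise through $h'_j$ (color $c'_j$, clashing with $f_j$), or through one of the subdivided chords via some $w_{j,\ell}$. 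Hence one is \emph{forced} to use some $w_{j,\ell}$ in every traversed $C_j$, and since $\psi(w_{j,\ell})$ is a variable-gadget color, avoiding a repeat imposes exactly the satisfiability constraint for clause $c_j$ relative to the chosen $X_i$-paths. Whether $(s_0,p_m)$ admits a rainbow shortest path therefore depends on whether $c_1\wedge\cdots\wedge c_{m-1}$ is satisfiable. Your final check---``for any pair other than $(s_0,t)$ a shortest path can be chosen avoiding such a clash''---cannot succeed.

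The paper handles this correctly by \emph{not} attempting an unconditional argument for such pairs. It partitions $V$ into $S=\{s_1,\ldots,s_m\}$, $A=\bigcup_i V(X_i)$, $L=\bigcup_j V(C_j)$, and for the case $u\in S$, $v\in L$ it explicitly uses the hypothesized rainbow shortest $s_0$-$t$ path $P$: take the subpath of $P$ from $u$ to $p_j$ (where $v\in C_j$), then extend inside $C_j$ to $v$ by a shortest path avoiding $y_j$. The hypothesis is precisely what supplies a consistent choice of $w_{j,\ell}$ in every traversed clause gadget. So the $(s_0,t)$ assumption is not merely the last pair left to check; it is an ingredient needed to resolve an entire family of pairs.
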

\begin{proof}
Trivially, it suffices show that if $s_0$ and $t$ are strongly rainbow vertex connected, then $G_\phi$ is strongly rainbow vertex connected. For convenience, we partition the vertex set $V$ into three groups. Indeed, let $V = S \cup A \cup L$, where $S = \{s_1,\ldots,s_m\}$, $A = \bigcup_{i=1}^{n} V(X_i)$, and $L = \bigcup_{j=1}^{m} V(C_j)$. Let $u$ and $v$ be two distinct vertices in $V$, and we will show they are strongly rainbow vertex connected. It is straightforward to verify $u$ and $v$ are strongly rainbow vertex connected when they are in the same group. So let us consider the three possible cases of $u$ and $v$ being in distinct groups.

\begin{itemize}
\item \textbf{Case 1:} $u \in S$ and $v \in A$ are strongly rainbow vertex connected.
\begin{subproof}
No two vertices in $S$ and $A$ share colors, so the claim follows.
\end{subproof}

\item \textbf{Case 2:} $u \in S$ and $v \in L$ are strongly rainbow vertex connected.
\begin{subproof}
By our assumption, there is a rainbow shortest path $P$ from $s_0$ and $t$. Observe that $P$ must use every color $c_1,\ldots,c_m$, and also every color $c'_1,\ldots,c'_m$. Therefore, it must be the case that $P$ uses the vertex $w_{j,\ell}$ for some $\ell \in [3]$ for every $j \in [m]$. So suppose the vertex $v$ is contained in a clause gadget $C_j$. By the above reasoning, it is clear that $p_j$ is reachable from $u \in S$ by a rainbow shortest path $P'$, which is a subpath of $P$. Finally, we can construct a shortest path $P''$ from $p_j$ to $v$ such that $y_j$ is not an internal vertex of $P''$. The concatenation of $P'$ and $P''$ gives us a rainbow shortest path between $u$ and $v$, so the claim follows.
\end{subproof}

\item \textbf{Case 3:} $u \in A$ and $v \in L$ are strongly rainbow vertex connected.
\begin{subproof}
Observe that we can always choose a shortest $u$-$v$ path $P$ so that none of the vertices $w_{j,\ell}$ appear as an internal vertex in $P$, for any $j \in [m]$ and $\ell \in [3]$. Thus, the claim follows.
\end{subproof}
\end{itemize}
This completes the proof.
\end{proof}

\begin{lemma}
\label{lem:g_rainbow_conn}
There is a vertex rainbow shortest path between $s_0$ and $t$ if and only if the formula $\phi$ is satisfiable.
\end{lemma}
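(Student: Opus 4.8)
The plan is to pin down exactly what a shortest $s_0$--$t$ path in $G_\phi$ can look like, and then to read off a truth assignment from the rainbow hypothesis. The first step is a distance computation inside the gadgets. In a variable gadget $X_i$ the positive $X_i$ path and the negative $X_i$ path are the only two $a_i$--$b_i$ paths and both have length~$4$, so $d(a_i,b_i)=4$. In a clause gadget $C_j$ one checks that $d(p_j,q'_j)=6$ and that the shortest $p_j$--$q'_j$ paths are precisely: the ``outer'' path through $r_{j,1},r_{j,2},r_{j,3},x_j,y_j$; the ``inner'' path through $h'_j,p'_j,r'_{j,1},r'_{j,2},r'_{j,3}$; and, for each $\ell\in[3]$, the path using the subdivided chord $r_{j,\ell},w_{j,\ell},r'_{j,\ell}$, which passes through exactly the single literal vertex $w_{j,\ell}$. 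Since $s_1,\dots,s_m$ form a pendant path hanging off $a_1$, and $a_1,b_1,d_1,a_2,\dots,b_n,d_n,p_1,q'_1,f_1,\dots,q'_m,t'$ are, in this order, vertices through which every $s_0$--$t$ path must pass, any $s_0$--$t$ path visits all of them in this order; and for it to be shortest it must traverse a shortest path inside each gadget. Hence a shortest $s_0$--$t$ path is specified by a choice of positive or negative arc in each $X_i$ together with one of the five shortest paths above in each $C_j$.

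For the forward direction (``rainbow shortest path $\Rightarrow$ $\phi$ satisfiable''), let $P$ be a vertex rainbow shortest $s_0$--$t$ path. All of $s_1,\dots,s_m$ are internal vertices of $P$, so $P$ uses every colour $c_1,\dots,c_m$; as $y_j$ is also coloured $c_j$, rainbowness forces $P$ to avoid $y_j$, which rules out the outer path in $C_j$. Likewise all of $f_1,\dots,f_{m-1}$ and $t'$ are internal vertices of $P$, so $P$ uses every colour $c'_1,\dots,c'_m$; as $h'_j$ is coloured $c'_j$, $P$ must avoid $h'_j$, which rules out the inner path in $C_j$. Thus in each clause gadget $P$ uses a unique literal vertex $w_{j,\ell(j)}$. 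Define $\tau(x_i)=1$ if $P$ uses the positive $X_i$ path and $\tau(x_i)=0$ otherwise. If $\tau(x_i)=1$, then $P$ carries the colours $c_{i,1},c_{i,2},c_{i,3}$ on $X_i$, so no selected vertex $w_{j,\ell(j)}$ may carry any colour $c_{i,r}$; by the definition of $\psi$ this says that no negative literal of $x_i$ is the selected literal of any clause. Symmetrically, if $\tau(x_i)=0$ then no positive literal of $x_i$ is selected. In either case the literal $l_{j,\ell(j)}$ selected in clause $c_j$ is satisfied by $\tau$, so $\tau$ satisfies $\phi$.

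For the converse, given a satisfying assignment $\tau$ I would build $P$ explicitly: take $s_0,s_1,\dots,s_m$, the edge $(s_m,a_1)$, then in each $X_i$ the positive or negative $X_i$ path according to $\tau(x_i)$, glued by the vertices $d_i$; then the vertex $d_n$; then in each $C_j$ the shortest path through the chord corresponding to some literal $l_{j,\ell}$ of $c_j$ that is true under $\tau$, glued by the vertices $f_j$; and finally $q'_m,t',t$. By the distance bounds above $P$ is a shortest $s_0$--$t$ path, and I would verify rainbowness colour class by colour class: each colour $c_j$ (resp.\ $c'_j$) occurs on $P$ only at $s_j$ (resp.\ at $f_j$ or $t'$) because $y_j$ and $h'_j$ are avoided; inside $X_i$ the chosen arc uses its three private colours once each; the selected vertex $w_{j,\ell}$ carries some $\overline{c}_{i,r}$ exactly when $l_{j,\ell}$ is a true positive literal of $x_i$, in which case $\tau(x_i)=1$ and $P$ used $c_{i,\cdot}$ rather than $\overline{c}_{i,\cdot}$ on $X_i$ (symmetrically for a true negative literal), so no selected literal vertex clashes with its variable gadget; two distinct selected literal vertices carry distinct colours because $\psi(w_{j,\ell})$ encodes a specific occurrence of a specific variable; and every remaining vertex of $P$ has a private colour. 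Hence $P$ is a vertex rainbow shortest path between $s_0$ and $t$.

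The step I expect to require the most care is the clause gadget bookkeeping: verifying that the only shortest $p_j$--$q'_j$ paths avoiding both $y_j$ and $h'_j$ are the three passing through a single $w_{j,\ell}$ (a shortest path cannot shortcut through two chords, nor re-enter the $h'_j$--$p'_j$ arc), and, in the converse direction, confirming that $\psi(w_{j,\ell})$ never collides with a colour used elsewhere on $P$ — in particular that the occurrence-index mechanism keeps all selected literal vertices pairwise colour-distinct across clauses. The remainder is a routine, if lengthy, verification.
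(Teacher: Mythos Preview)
Your proposal is correct and follows essentially the same approach as the paper's proof: in both directions one reads off the truth assignment from the positive/negative $X_i$ path chosen and matches selected literal vertices $w_{j,\ell}$ against the variable-gadget colours. Your clause-gadget bookkeeping is in fact more explicit than the paper's---you rule out the outer and inner $p_j$--$q'_j$ paths via the colour conflicts $\psi(y_j)=c_j=\psi(s_j)$ and $\psi(h'_j)=c'_j=\psi(f_j)$ (or $\psi(t')$), whereas the paper simply asserts that a rainbow shortest path uses exactly one $w_{j,\ell}$ per clause; otherwise the arguments coincide.
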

\begin{proof}
Suppose there is a vertex rainbow shortest path $P$ between $s_0$ and $t$, and we will show the formula $\phi$ is satisfiable. It is clear that $P$ must choose from every variable gadget $X_i$ either the positive or the negative $X_i$ path. Indeed, let us construct a truth assignment $\vect{\alpha} =(\alpha_1,\ldots,\alpha_n)$ for $\phi$ as follows. For every $X_i$, if $P$ is using the positive $X_i$ path, we set $\alpha_i = 1$. Otherwise, $P$ is using the negative $X_i$ path and we let $\alpha_i = 0$. We will then argue $\vect{\alpha}$ is a satisfying assignment for $\phi$. Consider a clause gadget $C_j$, where $j \in [m]$. It is easy to verify the vertex rainbow shortest path $P$ must use exactly one of the vertices $w_{j,\ell}$, where $\ell \in [3]$, in every $C_j$. Indeed, if two or more of the vertices $w_{j,\ell}$ were chosen, the path $P$ would not be a shortest path. So consider the vertex $w_{j,\ell}$ chosen by $P$ in some clause gadget $C_j$. Furthermore, suppose $w_{j,\ell}$ has received color $c_{i,\delta}$, for some $i \in [n]$ and $\delta \in [3]$ (recall a variable occurs at most three times in $\phi$). By construction, the literal $l_{j,k}$ corresponding to $w_{j,\ell}$ is a negative literal of the variable $x_i$. Moreover, color $c_{i,\delta}$ also appears on the positive $X_i$ path. Because $P$ contains $w_{j,\ell}$ colored $c_{i,\delta}$, it follows $P$ chooses the $X_i$ negative path. Thus, we have $\alpha_i = 0$, and the literal $l_{j,k}$ is set true by~$\vect{\alpha}$. The proof is symmetric for the case $w_{j,\ell}$ having color $\overline{c}_{i,\delta}$.

For the other direction, suppose $\phi$ is satisfiable under the assignment $\vect{\alpha} = (\alpha_1,\ldots,\alpha_n)$. We construct a vertex rainbow shortest path $P$ between $s_0$ and $t$ as the concatenation of two paths $P_V$ and $P_C$. To construct $P_V$, we proceed as follows. For each variable gadget $X_i$, if $\alpha_i = 1$ we choose the positive $X_i$ path; otherwise $\alpha_i = 0$ and we choose the $X_i$ negative path. Clearly, $P_V$ is a vertex rainbow shortest path from $s_0$ to $b_n$. We will then show that for every clause gadget $C_j$, there is a vertex $w_{j,\ell}$ such that its color does not appear on $P_V$. It will then be straightforward to construct the path $P_C$. Because $\vect{\alpha}$ is a satisfying assignment for $\phi$, each clause has a literal which is made true by $\vect{\alpha}$. Let $l_{j,k}$ be such a literal for a clause gadget $C_j$. Suppose $l_{j,k}$ is a positive literal of the variable $x_i$, for some $i \in [n]$. By construction, the vertex $w_{j,l}$ has received color $\overline{c}_{i,\delta}$, where $\delta \in [3]$. Because $l_{j,k}$ is a positive literal of $x_i$ and $l_{j,k}$ is made true by $\vect{\alpha}$, we have that $\alpha_i = 1$. Moreover, the path $P_V$ has taken the positive $X_i$ path, meaning it is using color $c_{i,1}$, $c_{i,2}$, and $c_{i,3}$. In other words, color $\overline{c}_{i,\delta}$ does not appear in $P_V$. Thus, the concatenation of $P_V$ and $P_C$ indeed gives us a vertex rainbow shortest path between $s_0$ and $t$. This completes the proof.
\end{proof}
For proving Theorem~\ref{thm:rvc_bip_planar}, the two above lemmas are slightly stronger than necessary. That is, given a positive instance of $\phi$, every pair of vertices in $G_\phi$ is not only rainbow vertex connected, but \emph{strongly} rainbow vertex connected. In other words, we have also proven the following.

\begin{theorem}
\probSrvc is $\NP$-complete when restricted to the class of bipartite planar graphs of maximum degree 3.
\end{theorem}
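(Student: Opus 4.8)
The plan is to observe that no new construction is needed: the graph $G_\phi$ and the vertex-coloring $\psi$ built for Theorem~\ref{thm:rvc_bip_planar} already witness hardness for the strong variant, because Lemmas~\ref{lem:g_srvc_iff_st_path} and~\ref{lem:g_rainbow_conn} were deliberately phrased in terms of \emph{strong} rainbow vertex connectedness. Concretely, I would first recall that $G_\phi$ is a bipartite planar graph of maximum degree~3, as verified during the construction, and that the reduction is computable in polynomial time in $|\phi|$: the variable and clause gadgets have constant size, there is one per variable and one per clause, and the remaining pieces (the connector vertices $d_i$, $f_j$, the vertices $t'$, $t$, and the path $s_0,\ldots,s_m$) contribute only linearly many vertices and edges. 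Membership of \probSrvc in $\NP$ follows from the remark in Section~\ref{sec:overview}: a certificate is a family of paths, one per vertex pair, each of which can be checked in polynomial time to be a shortest path whose internal vertices are pairwise differently colored.

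For correctness I would simply chain the two lemmas. Lemma~\ref{lem:g_rainbow_conn} gives that $\phi$ is satisfiable if and only if $G_\phi$ admits a vertex rainbow shortest path between $s_0$ and $t$ under $\psi$, and Lemma~\ref{lem:g_srvc_iff_st_path} gives that $G_\phi$ is strongly rainbow vertex connected under $\psi$ if and only if such an $s_0$--$t$ rainbow shortest path exists. Composing the two equivalences yields exactly that $\phi$ is satisfiable if and only if $(G_\phi,\psi)$ is a yes-instance of \probSrvc, which together with the polynomial-time computability of the reduction and the $\NP$ membership above establishes the theorem.

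Since the construction as presented assumes every clause of $\phi$ has exactly three literals, I would then point to the appendix clause gadget handling size-2 clauses (size-1 clauses being removed by unit propagation) and note that both equivalences go through verbatim once that gadget is substituted, so the hardness claim is unaffected by this assumption.

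I expect essentially no obstacle here: the genuine work was front-loaded into making Lemmas~\ref{lem:g_srvc_iff_st_path} and~\ref{lem:g_rainbow_conn} strong enough. The one point deserving a second look is the ``straightforward to verify'' clause in the proof of Lemma~\ref{lem:g_srvc_iff_st_path}, namely that every pair of vertices lying in a single group ($S$, $A$, or $L$) is joined by a rainbow shortest path in $G_\phi$ regardless of $\phi$; one should confirm that the distance-at-most-$2$ pairs are automatically rainbow and that the remaining within-gadget pairs admit shortest paths avoiding the colored vertices $y_j$, $h'_j$ and $s_j$ exactly as the inter-group cases require.
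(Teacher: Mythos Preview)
Your proposal is correct and matches the paper's own approach essentially verbatim: the paper simply observes that Lemmas~\ref{lem:g_srvc_iff_st_path} and~\ref{lem:g_rainbow_conn} were stated for \emph{strong} rainbow vertex connectedness, so the same construction $G_\phi,\psi$ already proves the theorem, and gives no separate proof. Your write-up is in fact more detailed than the paper's (which omits the explicit remarks on $\NP$ membership, polynomial-time computability, and the size-2 clause gadget), but the underlying argument is identical.
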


\subsection{Interval graphs}
\label{sec:interval}
In this subsection, we investigate the complexity of both \probRvc and \probSrvc on chordal graphs. We will show that both problems remain $\NP$-complete on interval graphs, which form a well-known subclass of chordal graphs. In fact, we will prove a stronger result for \probSrvc by showing it remains $\NP$-complete for proper interval graphs. A \emph{caterpillar} is a tree that has a dominating path. One can observe caterpillars form a subclass of interval graphs. Moreover, both problems are solvable in polynomial time on caterpillars.

\begin{theorem}
\label{thm:rvc_interval}
\probRvc is $\NP$-complete when restricted to the class of interval graphs.
\end{theorem}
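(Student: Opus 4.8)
The plan is to follow the template of Theorem~\ref{thm:rvc_bip_planar}: from a \occsat formula $\phi$ build a graph $G_\phi$ and a vertex-coloring $\psi$ in which every ``structural'' vertex receives a fresh, pairwise-distinct color, the gadgets are laid out path-like between two distinguished vertices $s$ and $t$, and the only vertex pair whose rainbow-vertex-connectivity depends on $\phi$ is $(s,t)$; this pair has a vertex rainbow path exactly when $\phi$ is satisfiable. Membership in $\NP$ is immediate (a certificate is one vertex rainbow path per pair), so the whole job is to realize this scheme inside the class of interval graphs, i.e.\ with $G_\phi$ chordal and admitting a clique tree that is a path. In particular the cyclic gadgets ($C_8$, $C_{12}$) of Theorem~\ref{thm:rvc_bip_planar} must be discarded in favor of chordal ``clique-chain'' gadgets strung out on a line.

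First I would fix, for each variable $x_i$, a gadget $X_i$ with ends $a_i,b_i$ offering two routes, one colored $c_{i,1},c_{i,2},c_{i,3}$ (``$x_i=1$'') and one colored $\overline{c}_{i,1},\overline{c}_{i,2},\overline{c}_{i,3}$ (``$x_i=0$''), such that every $a_i$-$b_i$ path commits to one route and uses exactly its color set; and for each clause $c_j$ a gadget $C_j$ with three literal vertices $w_{j,1},w_{j,2},w_{j,3}$, colored exactly as in Theorem~\ref{thm:rvc_bip_planar} (the $\ell$-th occurrence of a literal of $x_i$ gets the $\ell$-th color of the opposite route of $X_i$), such that every $s$-$t$ path crossing $C_j$ uses at least one $w_{j,\ell}$ and can use $w_{j,\ell}$ only when its color is still free. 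The ``forced detour'' is obtained as in Theorem~\ref{thm:rvc_bip_planar} by the twin-color trick: the straight-through vertices of $C_j$ carry colors $c_j,c_j'$ duplicated on the starting $s$-path and on the connector vertices between consecutive gadgets, so every straight-through route is blocked. Keeping the connectors $d_i$, $f_j$, $t'$ and the $s$-path as in Theorem~\ref{thm:rvc_bip_planar}, I would then exhibit an interval representation of $G_\phi$ by reading off the clique path.

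The theorem then follows from two lemmas in the style of Lemma~\ref{lem:g_srvc_iff_st_path} and Lemma~\ref{lem:g_rainbow_conn}. The first states that $G_\phi$ is rainbow vertex connected under $\psi$ iff there is a vertex rainbow path between $s$ and $t$; it is proved by splitting $V(G_\phi)$ into the $s$-path, the variable gadgets, and the clause gadgets, and checking that every pair within a part and every cross-part pair other than $(s,t)$ is rainbow-vertex-connected via the fresh colors (no shortest-path requirement is needed here, unlike in Lemma~\ref{lem:g_srvc_iff_st_path}). The second states that such a rainbow $s$-$t$ path exists iff $\phi$ is satisfiable: the forward direction reads the value of $x_i$ off the route taken through $X_i$ and uses the color of the $w_{j,\ell}$ visited in $C_j$ to name a satisfied literal of $c_j$; the backward direction builds the path alongside a satisfying assignment, routing $C_j$ through a $w_{j,\ell}$ of a true literal whose color is free precisely because $X_i$ was traversed on the matching route. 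Clauses with two literals are handled by a separate gadget in the appendix, and $G_\phi$ is clearly polynomial-time computable.

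The hard part will be the variable gadget $X_i$. The octagon of Theorem~\ref{thm:rvc_bip_planar} forces an all-positive/all-negative choice automatically, but it is not chordal; any triangulation of it must add ``cross chords'' between the two sides (a within-side chord would let a route skip a color and an $a_i$-$b_i$ chord would let a route skip the gadget), and a single cross chord already creates a ``mixed'' $a_i$-$b_i$ route whose used color set---hence whose set of freed $w$-colors---matches no consistent truth assignment, so that both an $x_i$-literal and a $\neg x_i$-literal could be certified simultaneously, breaking soundness. Because we are dealing with plain, not strong, rainbow vertex connectivity, such routes cannot be ruled out by shortest-path length (the device available for the strong variant on proper interval graphs); all the forcing must be structural and color-based. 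The crux is thus to engineer a chordal (indeed interval) realization of $X_i$ that admits no mixed traversal freeing a color that a consistent traversal would not---plausibly via a more elaborate clique-chain, or an auxiliary consistency-enforcing sub-gadget again exploiting the twin-color technique. Once $X_i$ is pinned down, the remaining verifications are bookkeeping of the same flavor as in Theorem~\ref{thm:rvc_bip_planar}.
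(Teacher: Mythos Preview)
Your high-level plan matches the paper exactly: reduce from \occsat, lay the gadgets out path-like between $s$ and $t$, reuse the twin-color trick on $c_j,c'_j$ to force a detour through some $w_{j,\ell}$ in each clause gadget, and argue via analogues of Lemmas~\ref{lem:g_srvc_iff_st_path} and~\ref{lem:g_rainbow_conn}. You also correctly isolate the crux---triangulating the octagon $X_i$ creates mixed $a_i$--$b_i$ routes that would free inconsistent $w$-colors---and correctly guess that the fix is a larger gadget with additional twin-color constraints.

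What is missing is precisely that fix: you stop at ``plausibly via a more elaborate clique-chain, or an auxiliary consistency-enforcing sub-gadget again exploiting the twin-color technique,'' whereas this is the entire content of the theorem beyond Theorem~\ref{thm:rvc_bip_planar}. The paper's solution is concrete: the variable gadget $X^I_i$ is not a triangulated $C_8$ but a $C_{20}$ with nineteen carefully chosen chords (making it interval), and six new \emph{blocking} colors $c_{i,a},\ldots,c_{i,f}$ are introduced, each placed on a pair of vertices---one on each side of the gadget---so that any $a_i$--$b_i$ rainbow path is forced to pick at most three vertices from the six ``near-$a_i$'' positions and hence must pass through all of $\{v_{i,5},v_{i,7},v_{i,9}\}$ (the positive side) or all of $\{v_{i,13},v_{i,15},v_{i,17}\}$ (the negative side). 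The clause gadget is similarly augmented: fifteen chords are added to $C_j$ to make it interval, and four further twin colors $c_{j,u},c_{j,v},c_{j,w},c_{j,z}$ are placed on adjacent cross-pairs so that, despite the new chords, any rainbow path through $C^I_j$ still must hit some $w_{j,\ell}$. Without these explicit constructions and the accompanying case analysis (``$R$ cannot choose more than three vertices from $\{v_{i,\ell},v_{i,16+\ell}\mid 2\le\ell\le 4\}$, hence must commit to $P$ or $N$''), your proposal remains a correct outline rather than a proof.
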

\begin{proof}
We assume the terminology of Theorem~\ref{thm:rvc_bip_planar}. Given a \occsat instance $\phi = \bigwedge_{j=1}^{m} c_i$ over variables $x_1,x_2,\ldots,x_n$, we follow a strategy similar to Theorem~\ref{thm:rvc_bip_planar}. We will first describe how variable and clause gadgets of a graph $G^I_\phi$ are built along with their vertex-colorings.

\begin{figure}
\includegraphics[scale=1]{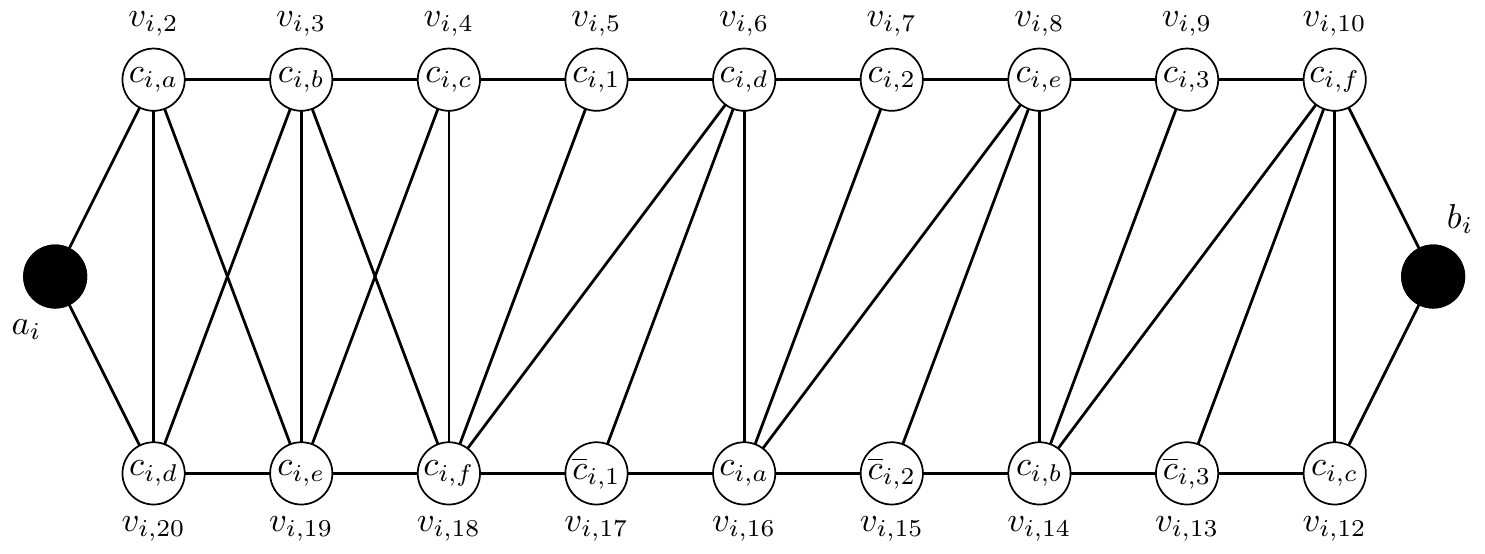}
\caption{A variable gadget $X^I_i$.}
\label{fig:chordal_variable_gadget}
\end{figure}

A variable gadget $X^I_i$ is built by starting from the cycle graph $C_{20}$ on the vertices $v_{i,\ell}$ in clockwise order, where $\ell \in [20]$. For convenience (and to match Theorem~\ref{thm:rvc_bip_planar}), we rename $v_{i,1}$ to $a_i$ and $v_{i,11}$ to $b_i$. We will then describe the altogether 19 chords added to $X^I_i$. First, we add the chords $(v_{i,2},v_{i,19})$, $(v_{i,2},v_{i,20})$, $(v_{i,3},v_{i,18})$, $(v_{i,3},v_{i,19})$, $(v_{i,3},v_{i,20})$, $(v_{i,4},v_{i,18})$, and $(v_{i,4},v_{i,19})$. Then, we add the chords $(v_{i,5},v_{i,18})$, $(v_{i,6},v_{i,16})$, $(v_{i,6},v_{i,17})$, $(v_{i,6},v_{i,18})$, $(v_{i,7},v_{i,16})$, $(v_{i,8},v_{i,14})$, $(v_{i,8},v_{i,15})$, $(v_{i,8},v_{i,16})$, $(v_{i,9},v_{i,14})$, $(v_{i,10},v_{i,14})$, $(v_{i,10},v_{i,13})$, and $(v_{i,10},v_{i,12})$. This completes the construction of a variable gadget $X^I_i$. A variable gadget $X^I_i$ is shown in Figure~\ref{fig:chordal_variable_gadget}. It is straightforward to verify $X^I_i$ admits a clique tree that is a path, and thus $X^I_i$ is an interval graph. 

We will then describe the vertex-coloring of $X^I_i$. For each $X^I_i$, we introduce 6 new colors $c_{i,a}$, $c_{i,b}$, $c_{i,c}$, $c_{i,d}$, $c_{i,e}$, and $c_{i,f}$. We color both vertices $v_{i,2}$ and $v_{i,16}$ with color $c_{i,a}$, both $v_{i,3}$ and $v_{i,14}$ with color $c_{i,b}$, both $v_{i,4}$ and $v_{i,12}$ with color $c_{i,c}$, both $v_{i,20}$ and $v_{i,6}$ with color $c_{i,d}$, both $v_{i,19}$ and $v_{i,8}$ with color $c_{i,e}$, and both $v_{i,18}$ and $v_{i,10}$ with color $c_{i,f}$. The vertices $v_{i,5}$, $v_{i,7}$, and $v_{i,9}$ receive colors $c_{i,1}$, $c_{i,2}$, and $c_{i,3}$, respectively. Similarly, the vertices $v_{i,17}$, $v_{i,15}$, and $v_{i,13}$ receive colors $\overline{c}_{i,1}$, $\overline{c}_{i,2}$, and $\overline{c}_{i,3}$, respectively. Conceptually, these two sets of three vertices correspond to the positive and the negative $X_i$ path of Theorem~\ref{thm:rvc_bip_planar}. The vertex-coloring of a variable gadget $X^I_i$ is shown in Figure~\ref{fig:chordal_variable_gadget}.

A clause gadget $C^I_j$ is built by starting from a clause gadget $C_j$, and by adding the altogether 15 chords $(r_{j,1},h'_j)$, $(r_{j,1},p'_j)$, $(r_{j,2},p'_j)$, $(r_{j,2},w_{j,1})$, $(r_{j,2},r'_{j,1})$, $(r_{j,3},r'_{j,1})$, $(r_{j,3},w_{j,2})$, $(r_{j,3},r'_{j,2})$, $(x_j,r'_{j,2})$, $(x_j,w_{j,3})$, $(x_j,r'_{j,3})$, $(y_j,r'_{j,3})$, $(p'_j,w_{j,1})$, $(r'_{j,1},w_{j,2})$, and $(r'_{j,2},w_{j,3})$. This completes the construction of a clause gadget $C^I_j$. A clause gadget $C^I_j$ is shown in Figure~\ref{fig:chordal_clause_gadget}. It can be verified $C^I_j$ admits a clique tree that is a path, and thus $C^I_j$ is an interval graph.

We will then describe the vertex-coloring of $C^I_j$. We color vertices $h'_j$ and $y_j$, and the three vertices $w_{j,\ell}$, for $\ell \in [3]$, exactly as in Theorem~\ref{thm:rvc_bip_planar}. Moreover, for each $C^I_j$, we introduce four new colors $c_{j,u}$, $c_{j,v}$, $c_{j,w}$, and $c_{j,z}$. We color both vertices $r_{j,1}$ and $p'_j$ with color $c_{j,u}$, both $r_{j,2}$ and $r'_{j,1}$ with color $c_{j,v}$, both $r_{j,3}$ and $r'_{j,2}$ with color $c_{j,w}$, and both $x_j$ and $r'_{j,3}$ with color $c_{j,z}$. The vertex-coloring of a clause gadget $C^I_j$ is shown in Figure~\ref{fig:chordal_clause_gadget}.

The variable and clause gadgets are joined together precisely as in Theorem~\ref{thm:rvc_bip_planar}. Furthermore, we also add vertices $s,s_1,\ldots,s_m$, $t'$, and $t$ exactly as in Theorem~\ref{thm:rvc_bip_planar}. The remaining uncolored vertices receive a fresh new color that does not appear in $G^I_\phi$. Formally, these are precisely the vertices in
\begin{equation*}
\begin{split}
U &= \{ a_i,b_i,d_i \mid 1 \leq i \leq n \} \\
\quad &\cup \{ p_j, q'_j \mid 1 \leq j \leq m \} \\
\quad &\cup \{ s_0,t \}.
\end{split}
\end{equation*}
Informally, disregarding the vertex-colorings, the graph $G^I_\phi$ differs from the graph $G_\phi$ of Theorem~\ref{thm:rvc_bip_planar} only in the way in which the gadgets are built.

We will then show these modifications do not contradict Lemma~\ref{lem:g_rainbow_conn}. Since we only modified the variable and clause gadgets, it suffices to inspect them. Consider a variable gadget $X^I_i$. We claim that any vertex rainbow path $R$ from $a_i$ to $b_i$ must still pass through all the vertices in either $P = \{ v_{i,5}, v_{i,7}, v_{i,9} \}$ or $N = \{ v_{i,17}, v_{i,15}, v_{i,13} \}$. Observe that the path $R$ must choose at least one vertex from each set of two vertices at a distance 1, 2, and 3 from $a_i$. Similarly, by the way in which the vertices are colored, the path $R$ must choose exactly one vertex from the set of two vertices at a distance 5, 7, and 9 from $a_i$. Thus, $R$ cannot choose more than three vertices from $\{ v_{i,\ell}, v_{i,16+\ell} \mid 2 \leq \ell \leq 4 \}$. It is then straightforward to verify that $R$ must pass through all the vertices in either $P$ or $N$. In other words, there are exactly two choices how the path $R$ can traverse from $a_i$ to $b_i$.

\begin{figure}[t]
\includegraphics[scale=1]{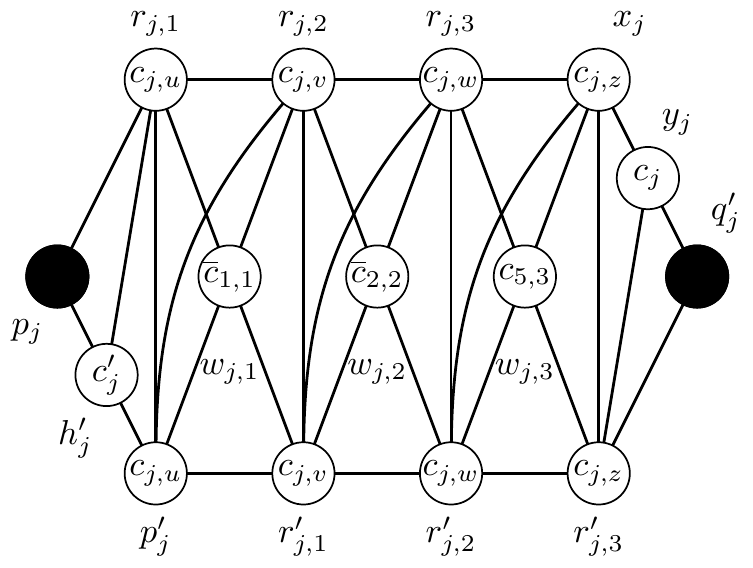}
\caption{A clause gadget $C^I_j$.}
\label{fig:chordal_clause_gadget}
\end{figure}

Finally, consider a clause gadget $C^I_j$. The addition of chords establishes additional paths between $p_j$ and $q'_j$. However, as each vertex in $\{ f_j \mid 1 \leq j < m \} \cup \{ t' \}$ is a cut vertex colored with color $c'_j$, no vertex rainbow path $R'$ from $s_0$ to $t$ can use vertex $h'_j$. It can be verified that $R'$ must still, in every $C^I_j$, use at least one of the vertices $w_{j,1}$, $w_{j,2}$, or $w_{j,3}$.
By an argument similar to Lemma~\ref{lem:g_rainbow_conn}, we have the theorem.
\end{proof}
We will then prove a stronger result for \probSrvc.
\begin{theorem}
\label{thm:srvc_interval_npc}
\probSrvc is $\NP$-complete when restricted to the class of proper interval graphs.
\end{theorem}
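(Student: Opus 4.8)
The plan is to give another polynomial-time reduction from \occsat, refining the construction of Theorem~\ref{thm:rvc_interval} in two respects: every gadget, and hence the whole graph, must additionally be claw-free (so that it is a \emph{proper} interval graph, which strengthens Theorem~\ref{thm:rvc_interval}); and the encoding of a satisfying assignment must take place along \emph{shortest} paths, so that the argument behind the pair Lemma~\ref{lem:g_srvc_iff_st_path}--Lemma~\ref{lem:g_rainbow_conn} carries over to the strong variant. Membership in $\NP$ is immediate, a certificate being one vertex-rainbow shortest path for each of the $O(n^2)$ pairs of vertices. Since proper interval graphs of large diameter are necessarily long, thin, path-like objects, the reduction is forced into this shape; this is exactly the shape of the spine $s_0, \ldots, s_m, X_1, \ldots, X_n, C_1, \ldots, C_m, t', t$ already used in Theorems~\ref{thm:rvc_bip_planar} and~\ref{thm:rvc_interval}, so the overall skeleton is unchanged.

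For the variable gadget I would build a claw-free interval graph $X^P_i$ with two distinguished vertices $a_i$, $b_i$ at a fixed even distance, designed so that: (a) there are exactly two vertex-rainbow shortest $a_i$--$b_i$ paths, a \emph{positive} one whose internal vertices include three vertices colored $c_{i,1}, c_{i,2}, c_{i,3}$ and a \emph{negative} one likewise carrying colors $\overline{c}_{i,1}, \overline{c}_{i,2}, \overline{c}_{i,3}$; and (b) every pair of vertices of $X^P_i$ is joined by a vertex-rainbow shortest path regardless of which of the two paths a global $s_0$--$t$ path uses. Concretely this can be obtained from a short power of a path with each of the six selector colors placed on two vertices lying symmetrically around $a_i$ and $b_i$: a rainbow shortest path cannot pick up two vertices of the same color, and the distance-counting argument already used for $X^I_i$ in Theorem~\ref{thm:rvc_interval} then forces it onto one of the two selector triples. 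For the clause gadget I would analogously build a claw-free interval graph $C^P_j$ with terminals $p_j$, $q'_j$ and three literal vertices $w_{j,1}, w_{j,2}, w_{j,3}$, colored exactly as in Theorem~\ref{thm:rvc_bip_planar} (the $k$th literal vertex receives the color of the \emph{opposite} literal of its variable), such that every vertex-rainbow shortest $p_j$--$q'_j$ path uses exactly one $w_{j,\ell}$; ``exactly one'' comes from the choice of distances, and ``at least one'' from padding the alternative routes with locally repeated colors, just as in the final paragraph of the proof of Theorem~\ref{thm:rvc_interval}.

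The gadgets are chained together with connector vertices $d_i$, $f_j$ and the tail $t'$, $t$ in precisely the manner of Theorem~\ref{thm:rvc_bip_planar}, and one then checks that the resulting graph $G^P_\phi$ is proper interval, for instance by exhibiting a unit-interval model, equivalently a clique path in which every vertex occupies a contiguous block of bags and no induced $K_{1,3}$ survives the joins. With such gadgets in place the two lemmas go through essentially verbatim: as in Lemma~\ref{lem:g_srvc_iff_st_path} it suffices to strongly rainbow vertex connect $s_0$ and $t$, since no other pair shares enough colors to be obstructed; and as in Lemma~\ref{lem:g_rainbow_conn} a vertex-rainbow shortest $s_0$--$t$ path exists if and only if $\phi$ is satisfiable, the value of $x_i$ being read off from which selector triple the path uses inside $X^P_i$ and the satisfaction of each clause being witnessed by the literal vertex chosen inside $C^P_j$. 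Clauses of size two are handled by a separate gadget in the appendix, exactly as elsewhere.

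The main obstacle is reconciling claw-freeness with the precise distance control demanded above. In the interval construction one may add many chords to block unwanted rainbow paths, but claw-freeness forbids this, since the neighborhood of each vertex must be covered by two cliques; the gadgets are thus forced to be thickened paths in which there are a priori \emph{many} shortest paths between the terminals. All of the routing must therefore be pushed onto the coloring --- every unwanted shortest path has to be destroyed by a repeated color --- while one simultaneously verifies that no such repeated color accidentally destroys the required rainbow shortest path between some other pair of vertices inside the same gadget, and that the $a_i$--$b_i$ and $p_j$--$q'_j$ distances are exactly what the argument needs. Designing a padding scheme that is at once claw-free, distance-preserving, and leaves precisely the two (respectively three) intended routes rainbow is the delicate part; the rest is bookkeeping identical in spirit to the proofs already given.
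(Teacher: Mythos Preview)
Your high-level strategy is sound and matches the paper's: reduce from \occsat along the same spine $s_0,\ldots,s_m,X_1,\ldots,X_n,C_1,\ldots,C_m,t',t$, make every gadget a proper interval graph, and re-run Lemma~\ref{lem:g_srvc_iff_st_path} and Lemma~\ref{lem:g_rainbow_conn} with shortest paths. However, you significantly overestimate the difficulty and propose to redesign both gadgets from scratch, calling the claw-free padding ``the delicate part.'' The paper's point is precisely that almost nothing needs to be redesigned: the clause gadgets $C^I_j$ from Theorem~\ref{thm:rvc_interval} are already claw-free as built, and the variable gadgets $X^I_i$ become claw-free after swapping three chords --- delete $(v_{i,6+2k},v_{i,18-2k})$ and add $(v_{i,5+2k},v_{i,17-2k})$ for $k\in\{0,1,2\}$. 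After this swap $d(a_i,b_i)=10$, so a \emph{shortest} $a_i$--$b_i$ path cannot use any of the new diagonal chords, and the same distance-counting argument as in Theorem~\ref{thm:rvc_interval} forces it through either all of $P=\{v_{i,5},v_{i,7},v_{i,9}\}$ or all of $N=\{v_{i,13},v_{i,15},v_{i,17}\}$; in each $C^I_j$ a shortest $p_j$--$q'_j$ path is already forced through exactly one $w_{j,\ell}$.

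So the gap in your proposal is not a wrong idea but a missing one: you never observe that $G^I_\phi$ is essentially already the graph you want, and you give no concrete gadgets of your own (``a short power of a path'' is a gesture, not a construction). What you flag as the main obstacle --- reconciling claw-freeness with exact distance control while keeping only the intended rainbow shortest paths --- dissolves once you inspect $X^I_i$ and $C^I_j$ and notice that a three-chord local edit suffices. The paper also remarks, and you should too, why this edit does \emph{not} work for \probRvc: without the shortest-path constraint a rainbow $a_i$--$b_i$ path can use the new chords to dodge a selector color, which is why Theorem~\ref{thm:rvc_interval} and Theorem~\ref{thm:srvc_interval_npc} need slightly different variable gadgets.
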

\begin{proof}
We assume the terminology of Theorem~\ref{thm:rvc_interval}. Given a \occsat instance $\phi$, we construct the graph $G^I_\phi$ exactly as in Theorem~\ref{thm:rvc_interval}; we will only slightly change the variable gadgets $X^I_i$ to prove our claim. Indeed, we delete the chords $(v_{i,6+2k},v_{i,18-2k})$ and add the chords $(v_{i,5+2k},v_{i,17-2k})$, where $0 \leq k \leq 2$. 

First, observe that this modification does not break the property of $G^I_\phi$ being interval. Furthermore, we can now verify $G^I_\phi$ is also claw-free. Then, consider a vertex rainbow shortest path $R$ from $a_i$ to $b_i$ after the deletion and addition of new chords. The distance $d(a_i,b_i)$ is now 10, so $R$ cannot use any of the newly added chords $(v_{i,5+2k},v_{i,17-2k})$, where $0 \leq k \leq 2$. By an argument similar to that of Theorem~\ref{thm:rvc_interval}, any $R$ must use either exactly all vertices in $N$, or all vertices $P$. Finally, any $R$ must choose from every $C^I_j$ exactly one of the vertices $w_{j,1}$, $w_{j,2}$, or $w_{j,3}$. Thus, the theorem follows.
\end{proof}
It is worth observing the modification of the variable gadget in the above theorem does not extend for \probRvc (Theorem~\ref{thm:rvc_interval}). Indeed, if the path $R$ from $a_i$ to $b_i$ is not required to be a shortest path, it is possible to construct $R$ such that a particular color from $\{ c_{i,\ell}, \overline{c}_{i,\ell} \mid 1 \leq \ell \leq 3 \}$ is avoided, possibly breaking Lemma~\ref{lem:g_rainbow_conn}.

\subsection{Cubic graphs}
\label{sec:cubic}
In this subsection, we turn our attention to regular graphs. It is easy to see that both \probRvc and \probSrvc are solvable in polynomial time on 2-regular graphs. Therefore, we will consider 3-regular graphs, i.e., cubic graphs. In contrast to previous constructions, we will need additional gadgets. Strictly speaking, the gadgets we introduce in the following are not cubic. However, when the gadgets are connected together, the resulting graph will be cubic.

Indeed, before proceeding, we will describe a parametric gadget that will serve different purposes in a construction to follow. This parametric gadget $T_k$, where $k \geq 1$, is a cycle graph of length $8k+2$. We choose two vertices $v_s$ and $v_t$ such that $d(v_s,v_t) = 4k+1$. The two $v_s$-$v_t$ paths of length $4k+1$ are broken down into $4k$ vertices $v_{i,\ell}$ and $v'_{i,\ell}$, respectively, where $i \in [k]$ and $\ell \in [4]$. The construction is finished by adding the chords $(v_{k,1},v'_{k,2})$, $(v_{k,2},v'_{k,1})$, $(v_{k,3},v'_{k,4})$, and $(v_{k,4},v'_{k,3})$, for each $k$. An example of a $T_k$ for $k=3$ is shown in Figure~\ref{fig:cubic_variable_gadget}. For each $k$, we introduce a set of three ``blocking'' colors $\{ c^*_{k,1}, c^*_{k,2}, c^*_{k,3} \}$ and color the vertices as follows: both vertices $v_{k,1}$ and $v'_{k,4}$ receive color $c^*_{k,1}$, both vertices $v_{k,3}$ and $v'_{k,1}$ receive color $c^*_{k,2}$, and both vertices $v_{k,4}$ and $v'_{k,3}$ receive color $c^*_{k,3}$. Both $v_s$ and $v_t$ receive a fresh new color that does not appear elsewhere. Exactly $2k$ vertices are now left uncolored: depending on the situation, we will color these vertices differently. However, we can still argue the following about a vertex rainbow path traversing $T_k$.

\begin{figure}
\includegraphics[scale=1]{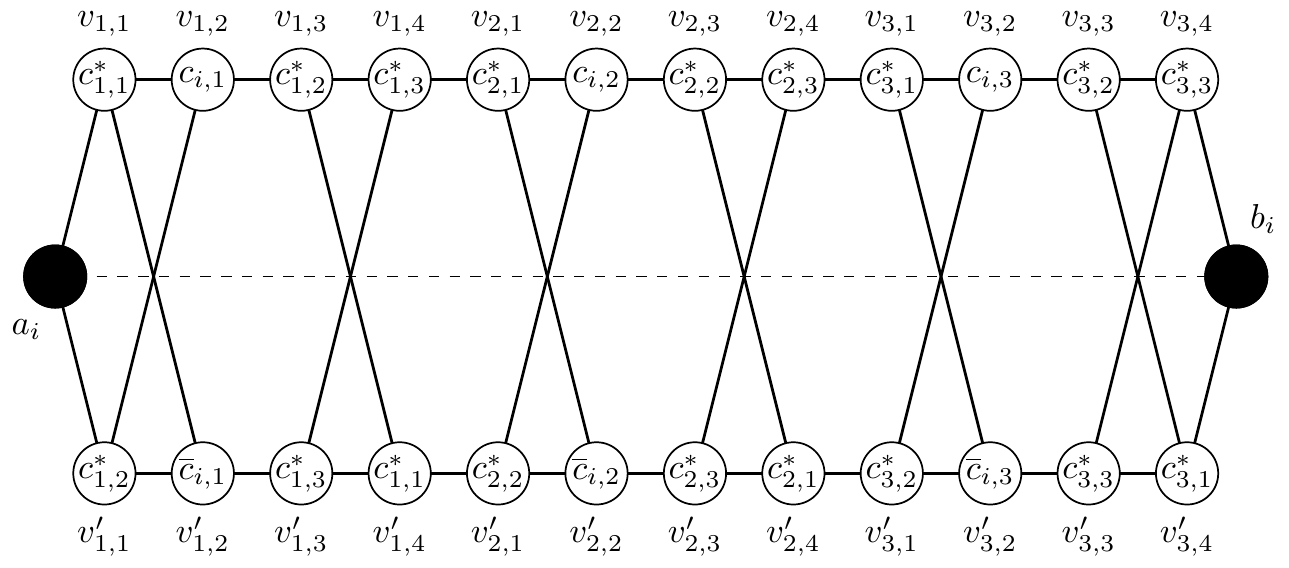}
\caption{A variable gadget $X^\Delta_i = T_3$. The vertex $v_s$ has been renamed to $a_i$, and the vertex $v_t$ to $b_i$. The dashed horizontal line divides the gadget conceptually into two segments: no vertex rainbow path from $a_i$ to $b_i$ will cross the dashed line by Lemma~\ref{lem:parametric_gadget}.}
\label{fig:cubic_variable_gadget}
\end{figure}

\begin{lemma}
\label{lem:parametric_gadget}
Let $R$ be a vertex rainbow path from $v_s$ to $v_t$ in a parametric gadget $T_k$, where $k \geq 1$. There are no $v_{\ell,i}$ and $v'_{\ell',j}$ in $R$ with $1 \leq i \leq j \leq 4$ and $\ell,\ell' \in [k]$.
\end{lemma}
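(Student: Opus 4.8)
The plan is to argue by contradiction about the structure of a shortest-ish rainbow path through $T_k$, exploiting the three blocking colors together with the four ``crossing'' chords $(v_{k,1},v'_{k,2})$, $(v_{k,2},v'_{k,1})$, $(v_{k,3},v'_{k,4})$, $(v_{k,4},v'_{k,3})$. First I would set up coordinates: label the vertices of $T_k$ along the cycle so that the ``left'' branch from $v_s$ to $v_t$ reads $v_s, v_{1,1},\dots,v_{1,4},v_{2,1},\dots,v_{k,4}, v_t$ and the ``right'' branch reads $v_s, v'_{1,1},\dots,v'_{k,4}, v_t$, and note that the only non-cycle edges are the four listed chords, all of which sit between the two branches at ``level'' $k$. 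The key local observation is that a rainbow path from $v_s$ to $v_t$, once it leaves $v_s$, walks monotonically through consecutive vertices of one branch until (if ever) it uses one of the four chords to switch branches; it can switch at most a bounded number of times, and each switch happens among the eight vertices $\{v_{k,\ell},v'_{k,\ell}:\ell\in[4]\}$.

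The heart of the argument is a case analysis on which chords $R$ uses. Suppose for contradiction that $R$ contains some $v_{\ell,i}$ and some $v'_{\ell',j}$ with $1\le i\le j\le 4$. Since all chords are at level $k$, to contain a vertex $v'_{\ell',j}$ on the right branch with $\ell' \le k$, the path $R$ must have crossed from the left branch to the right branch using one of the four chords; symmetrically, to later (or earlier) be on the left branch it must cross back. I would track the colors forced along each crossing pattern. For instance, if $R$ uses the chord $(v_{k,1},v'_{k,2})$ to move from the left branch to the right branch, then continuing toward $v_t$ along the right branch it next meets $v'_{k,3}$ and $v'_{k,4}$, which carry colors $c^*_{k,3}$ and $c^*_{k,1}$; but $c^*_{k,1}$ also sits on $v_{k,1}$, which $R$ just used — contradicting rainbowness. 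Running through the (few) ways a rainbow $v_s$--$v_t$ path can traverse the level-$k$ block, one finds that the only colorings of internal vertices that avoid repeating a blocking color force the path to stay entirely within one branch after the last crossing, and in fact force it never to have both a $v_{\ell,i}$ and a $v'_{\ell',j}$ with $i\le j$ simultaneously present; the forbidden configurations all collide with one of $c^*_{k,1},c^*_{k,2},c^*_{k,3}$ appearing twice. This is essentially the same ``two paths between $a_i$ and $b_i$, colored to block each other'' phenomenon already used in Lemma~\ref{lem:g_rainbow_conn}, transplanted to the cubic setting.

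I expect the main obstacle to be bookkeeping: there are four chords and the path may use more than one of them, so the case analysis must be organized so that it is genuinely finite and clearly exhaustive — e.g. by observing that $R$ restricted to the level-$k$ gadget is an induced subpath of a small fixed graph on the eight vertices plus $v_s,v_t$, and simply enumerating its rainbow $v_s$--$v_t$ subpaths there, then checking the claimed index inequality holds in each. Once that finite check is done, the conclusion propagates back to all of $T_k$ because nothing outside level $k$ offers any way to move between the two branches. I would present the finite check compactly (a short table or a one-line-per-case list inside the proof) rather than belaboring each subcase, since each individual subcase is just ``this chord forces blocking color $c^*_{k,\cdot}$ to repeat.''
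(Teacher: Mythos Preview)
Your reading of the gadget is off in one important structural respect: the four chords $(v_{i,1},v'_{i,2})$, $(v_{i,2},v'_{i,1})$, $(v_{i,3},v'_{i,4})$, $(v_{i,4},v'_{i,3})$ are present at \emph{every} level $i\in[k]$, not only at the top level. The paper's phrase ``for each $k$'' is (admittedly unfortunate) shorthand for ``for each level''; you can also infer this from context, since Section~\ref{sec:cubic} is constructing a cubic graph, and without chords at level $i$ the vertices $v_{i,\ell},v'_{i,\ell}$ would have degree~$2$. Your sentence ``nothing outside level $k$ offers any way to move between the two branches'' is therefore false, and the plan to localise the entire case analysis to a single eight-vertex block and then ``propagate back'' does not go through as written.

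The good news is that once you correct this, the argument becomes \emph{simpler}, not harder, and in fact the paper does not enumerate chord-crossing patterns at all. It argues level by level: for each $i$, the pairs $\{v_{i,1},v'_{i,1}\}$ and $\{v_{i,4},v'_{i,4}\}$ separate $v_s$ from $v_t$, so $R$ meets each pair. If $v_{i,1}\in R$ then $v'_{i,4}\notin R$ (both carry $c^*_{i,1}$), hence $v_{i,4}\in R$; the remaining possibility is dispatched by the colors $c^*_{i,2}$ and $c^*_{i,3}$ in the same way, giving $v'_{i,1}\in R\Rightarrow v'_{i,4}\in R$. Thus the path enters and exits every level on the same branch, and since consecutive levels are joined only by $v_{i,4}v_{i+1,1}$ and $v'_{i,4}v'_{i+1,1}$, the branch is the same throughout. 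This short cut-plus-color argument, repeated uniformly over the levels, replaces your proposed table of chord subcases.
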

\begin{proof}
For every $k$, the path $R$ must choose either $v_{k,1}$ or $v'_{k,1}$. Similarly, either $v_{k,4}$ or $v'_{k,4}$ must be chosen. If $v_{k,1}$ is chosen, then $v'_{k,4}$ cannot be chosen, as they share the same color $c^*_{k,1}$ by construction. Then, if $v'_{k,1}$ is chosen, $v'_{k,3}$ must be chosen. But then $v'_{k,3}$ and $v_{k,4}$ share the same color $c^*_{k,2}$ by construction. It follows that if $v_{k,1}$ is chosen, $v_{k,4}$ must be chosen. Symmetrically, if $v'_{k,1}$ is chosen, $v'_{k,4}$ must be chosen.
\end{proof}
The above lemma is illustrated in Figure~\ref{fig:cubic_variable_gadget}.

Informally, the color scheme described above allows us to enforce ``choose all'' type of constraints. For a clause gadget, we wish to enforce ``choose at least one'' type of constraints. Indeed, the reader should be aware that in the following, while a clause gadget is structurally a parametric gadget $T_k$, its vertex-coloring will be different. 

We will also mention that a parametric gadget $T_k$ with $k=3$ will be constructed for each variable. Here, the reader should note we do not distinguish between say vertex $v_{1,1}$ in the first variable gadget, and the vertex $v_{1,1}$ in the second variable gadget. We feel the danger for confusion is not large enough to warrant the notational burden. We are then ready to proceed with the following.
\begin{theorem}
\label{thm:rvc_cubic_npc}
\probRvc is $\NP$-complete when restricted to the class of triangle-free cubic graphs.
\end{theorem}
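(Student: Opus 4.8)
The plan is to build $G^\Delta_\phi$ out of parametric gadgets exactly in the spirit of Theorem~\ref{thm:rvc_bip_planar}, but ensuring the resulting graph is cubic and triangle-free. Concretely, for each variable $x_i$ I take a copy of $T_3$ as the variable gadget $X^\Delta_i$ (renaming $v_s$ to $a_i$ and $v_t$ to $b_i$ as in Figure~\ref{fig:cubic_variable_gadget}); Lemma~\ref{lem:parametric_gadget} already tells us that any vertex rainbow path from $a_i$ to $b_i$ stays within one of the two ``segments'' of $T_3$, and I color the six so-far-uncolored vertices of that gadget with $c_{i,1},c_{i,2},c_{i,3}$ on one side and $\overline{c}_{i,1},\overline{c}_{i,2},\overline{c}_{i,3}$ on the other, so that choosing a side corresponds to setting $x_i=1$ or $x_i=0$, precisely mirroring the positive/negative $X_i$ path of Theorem~\ref{thm:rvc_bip_planar}. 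For each clause $c_j$ I use a structurally identical $T_k$ (with $k$ chosen large enough to host three literal-vertices $w_{j,1},w_{j,2},w_{j,3}$), but now colored to enforce a ``choose at least one $w_{j,\ell}$'' constraint rather than a ``choose all'' constraint, again reusing the $w_{j,\ell}$ coloring rule from Theorem~\ref{thm:rvc_bip_planar} keyed to which literal occurrence of which variable $w_{j,\ell}$ represents. The gadgets are then strung together into a path-like arrangement with connector vertices, an $s$-side path $s_0,\dots$, and a $t$-side tail, exactly as before.

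The degree bookkeeping is the part that needs care and is where I expect the real work to lie. In $T_k$ the two distinguished vertices $v_s,v_t$ and the four endpoints of the added chords have degree $3$ already, but all other cycle vertices have degree $2$, so when two gadgets are concatenated through a connector vertex $d_i$ or $f_j$ I must spend exactly the ``missing'' degree at $a_i,b_i,p_j,q'_j$ and at the connectors; the endpoints $s_0$ and $t$ of the whole path-like graph will have degree $1$ unless I pad them. The standard fix is to attach small auxiliary gadgets (themselves fragments of a $T_k$, or short cycles with appropriately fresh ``blocking'' colors) that raise every low-degree vertex to degree exactly $3$ without creating any new rainbow shortest-path shortcuts between $s_0$ and $t$ and without creating triangles — since every $T_k$ is a subdivided cycle with only ``distance-preserving'' chords between vertices far apart on the cycle, triangle-freeness is automatic as long as the padding gadgets are also triangle-free. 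I would state a short claim that these padding vertices receive fresh colors (or the blocking colors of an isolated $T_k$) and hence never lie on a rainbow $s_0$--$t$ path, so they are irrelevant to connectivity.

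With the construction in hand, the correctness proof is a near-verbatim replay of Lemmas~\ref{lem:g_srvc_iff_st_path} and~\ref{lem:g_rainbow_conn}. First I argue $G^\Delta_\phi$ is rainbow vertex connected iff there is a vertex rainbow path from $s_0$ to $t$: every pair of vertices inside a single gadget is rainbow vertex connected regardless of $\phi$ (the coloring of each $T_k$ is designed so that local pairs have short rainbow paths), and any cross-gadget pair either shares no colors or can be reached by concatenating a subpath of the hypothetical $s_0$--$t$ rainbow path with a short local detour, just as in the three-case analysis of Lemma~\ref{lem:g_srvc_iff_st_path}. Then I show a vertex rainbow $s_0$--$t$ path exists iff $\phi$ is satisfiable: by Lemma~\ref{lem:parametric_gadget} such a path commits to one segment of each variable gadget, defining a truth assignment; the clause-gadget coloring forces the path to traverse some $w_{j,\ell}$ whose color is not already used on the variable side, which by the $w_{j,\ell}$ coloring rule means the corresponding literal is satisfied — and conversely a satisfying assignment tells us which segment to pick in each variable gadget and which $w_{j,\ell}$ is free in each clause gadget, which we splice together into the desired path. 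Membership in $\NP$ is immediate as noted in Subsection~\ref{sec:overview}. The main obstacle, to reiterate, is making the degree-$3$ regularity and triangle-freeness hold simultaneously with all the ``choose all / choose at least one'' constraints intact; once the padding gadgets are pinned down the logical core is routine. (I would also remark that, as in Theorem~\ref{thm:rvc_interval}, whether this construction yields hardness for \probSrvc as well hinges on whether the forced $s_0$--$t$ path can be taken to be a shortest path; I expect to handle \probSrvc separately, e.g.\ via the $k$-regular $k\ge4$ construction of Subsection~\ref{sec:regular}, rather than forcing it here.)
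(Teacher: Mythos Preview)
Your overall plan matches the paper's: variable gadgets are copies of $T_3$, clause gadgets are parametric gadgets (the paper uses $T_2$), and correctness is a replay of Lemmas~\ref{lem:g_srvc_iff_st_path}--\ref{lem:g_rainbow_conn} via Lemma~\ref{lem:parametric_gadget}. But your degree analysis of $T_k$ is inverted: every vertex \emph{except} $v_s$ and $v_t$ already has degree~$3$ (each of the $8k$ vertices $v_{i,\ell}, v'_{i,\ell}$ is an endpoint of exactly one of the $4k$ chords), while $v_s$ and $v_t$ have degree~$2$. So chaining gadgets by single edges at $a_i,b_i,p_j,q'_j$ makes those vertices cubic automatically; the genuine degree deficits are at the connector vertices $f_j$ (which subdivide inter-clause edges and hence have degree~$2$) and at the two extreme ends of the chain.

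The paper fills these with three concrete devices you only gesture at. First, the $s$-side is \emph{not} a path $s_0,\dots,s_m$ as you propose (which would leave many degree-$2$ vertices) but a \emph{tail gadget} that is itself a copy of $T_m$, with both $v_{j,2}$ and $v'_{j,2}$ colored $c_j$ for each $j\in[m]$; this reuses the parametric gadget to get cubicity for free while still forcing any rainbow path through it to spend every color $c_j$. Second, each $f_j$ is attached to a fixed $7$-vertex triangle-free \emph{dummy gadget} (a $C_5$ plus two extra vertices, all of whose vertices are cubic except one degree-$2$ attachment point). Third, the $t$-side is capped by a \emph{head gadget}, another dummy gadget attached at $q'_m$. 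One further correction: the paper \emph{does} derive \probSrvc hardness for triangle-free cubic graphs directly from this same construction (Theorem~\ref{thm:srvc_cubic_npc}), since the forced $s$--$t$ rainbow path is in fact shortest; your plan to handle \probSrvc separately is unnecessary.
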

\begin{proof}
We assume the terminology of Theorem~\ref{thm:rvc_bip_planar}. Given a \occsat instance $\phi = \bigwedge_{j=1}^{m} c_i$ over variables $x_1,x_2,\ldots,x_n$, we follow a strategy similar to Theorem~\ref{thm:rvc_bip_planar}. We will first describe how variable and clause gadgets of a graph $G^\Delta_\phi$ are built along with their vertex-colorings.

A variable gadget $X^\Delta_i$ is a parametric gadget $T_k$, where $k=3$. To match Theorem~\ref{thm:rvc_interval} we shall rename, for each $i \in [n]$, the vertex $v_s$ to $a_i$ and the vertex $v_t$ to $b_i$ in a variable gadget $X^\Delta_i$. The uncolored vertices $v_{1,2}$, $v_{2,2}$, and $v_{3,2}$ receive colors $c_{i,1}$, $c_{i,2}$, and $c_{i,3}$, respectively. Similarly, the vertices $v'_{1,2}$, $v'_{2,2}$, and $v'_{3,2}$ receive colors $\overline{c}_{i,1}$, $\overline{c}_{i,2}$, and $\overline{c}_{i,3}$, respectively. Conceptually, these two sets of three vertices correspond to the positive and the negative $X_i$ path of Theorem~\ref{thm:rvc_bip_planar}. A variable gadget $X^\Delta_i$ along with its vertex-coloring is shown in Figure~\ref{fig:cubic_variable_gadget}.

A clause gadget $C^\Delta_j$ is a parametric gadget $T_k$, where $k=2$. To match Theorem~\ref{thm:rvc_interval} we shall rename, for each $j \in [m]$, the vertex $v_s$ to $p_j$ and the vertex $v_t$ to $q'_j$ in a clause gadget $C^\Delta_j$. We will then describe how each vertex of $C^\Delta_j$ is colored; note that we do not follow the usual coloring scheme of $T_k$ here. For convenience, let us rename $v_{1,1}$ to $r_{j,1}$, $v_{1,2}$ to $r_{j,2}$, $v_{1,3}$ to $r_{j,3}$, $v_{2,1}$ to $r_{j,4}$, and $v'_{2,3}$ to $r_{j,5}$. Also, let us rename $v'_{1,2}$ to $w_{j,1}$, $v'_{1,4}$ to $w_{j,2}$, and $v'_{2,2}$ to $w_{j,3}$. Then, the vertex $w_{j,\ell}$ for $\ell \in [3]$ is colored precisely as in Theorem~\ref{thm:rvc_bip_planar}. The vertex $v'_{1,1}$ receives color $c'_j$, and the vertex $v_{2,4}$ color $c_j$. We introduce a set of three ``blocking'' colors $\{ c^*_{j,x}, c^*_{j,y}, c^*_{j,z} \}$, and color both vertices $v_{1,4}$ and $v'_{1,3}$ with $c^*_{j,x}$, both $v_{2,2}$ and $v'_{2,1}$ with $c^*_{j,y}$, and both $v_{2,3}$ and $v'_{2,4}$ with $c^*_{j,z}$. A clause gadget along with its vertex-coloring is shown in Figure~\ref{fig:cubic_clause_head_gadgets}~(a).

\begin{figure}
\bsubfloat[]{%
  \includegraphics[scale=1]{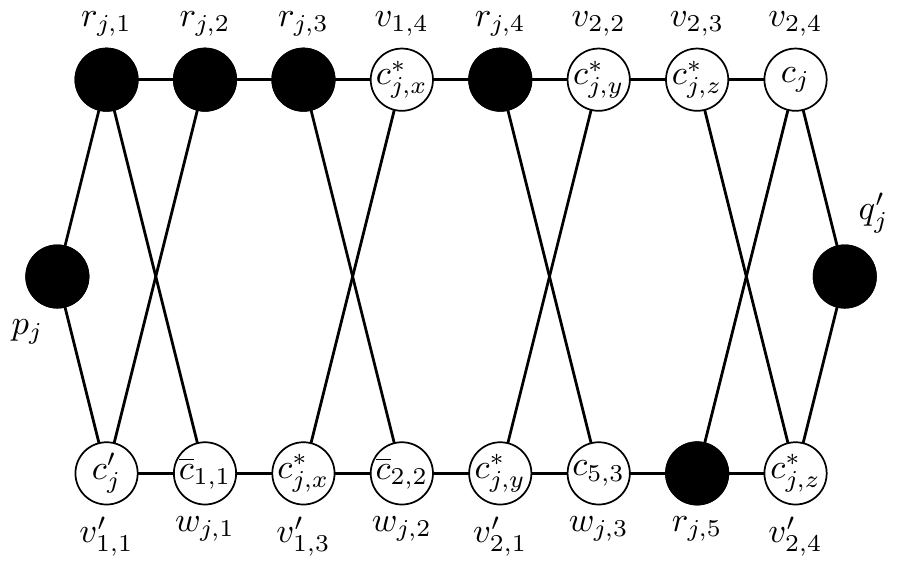}%
}
\bsubfloat[]{%
  \includegraphics[scale=1]{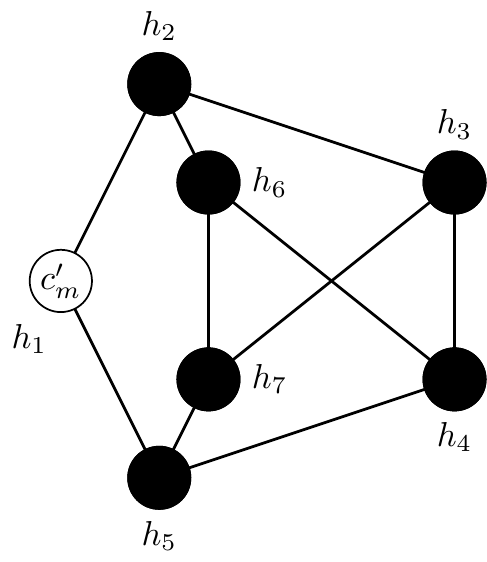}%
}
\bsubfloati\qquad\bsubfloatii
\caption{\textbf{(a)} A clause gadget $C^\Delta_j = T_2$ with some vertices renamed, and \textbf{(b)} the head gadget.}
\label{fig:cubic_clause_head_gadgets}
\end{figure}

We will then describe how variable and clause gadgets are connected together, along with some additional gadgets. After describing the additional gadgets, we will explain how they are vertex-colored. For each $1 \leq i < n$, we connect $X^\Delta_i$ with $X^\Delta_{i+1}$ by adding the edge $(b_i,a_{i+1})$. Similarly, for each $1 \leq j < m$, we connect $C^\Delta_i$ with $C^\Delta_{i+1}$ by adding the edge $(q'_j,p_{j+1})$. Let us then subdivide the edge $(q'_j,p_{j+1})$ by a new vertex $f_j$. For each $f_j$, we introduce the following \emph{dummy gadget}. A dummy gadget is constructed by starting from the cycle graph $C_5$ on the vertices $h_{j,q}$ in clockwise order, where $q \in [5]$, and two additional vertices $h_{j,6}$ and $h_{j,7}$. The construction of a dummy gadget is finished by adding the edges $(h_{j,2},h_{j,6})$, $(h_{j,3},h_{j,7})$, $(h_{j,4},h_{j,6})$, $(h_{j,5},h_{j,7})$, and $(h_{j,6},h_{j,7})$. The vertex $f_j$ is made adjacent to $h_{j,1}$ by adding the edge $(f_j,h_{j,1})$. Finally, the two components are connected by adding the edge $(b_n,p_1)$. 

We will then construct a \emph{tail gadget}, which is a parametric gadget $T_k$ with $k=m$. The vertex $v_s$ of the tail gadget is connected to $a_1$, i.e., we add the edge $(v_s,a_1)$. In addition, we construct a single dummy gadget, and connect its degree two vertex $h_1$ with $G^\Delta_\phi$ by adding the edge $(q'_m,h_1)$. For convenience, we will refer to this dummy gadget as the \emph{head gadget}. The head gadget is shown in Figure~\ref{fig:cubic_clause_head_gadgets}~(b).

We will then describe the vertex-coloring of the remaining vertices. In the tail gadget, both vertices $v_{j,2}$ and $v'_{j,2}$ receive color $c_j$, for every $j \in [m]$. Other vertices in a tail gadget follow the coloring scheme described for a $T_k$ in the beginning of Section~\ref{sec:cubic}. For each $1 \leq j < m$, we color vertex $f_j$ with color $c'_j$. Then, in the head gadget, the vertex $h_1$ receives color $c'_m$. Every other uncolored vertex of $G^\Delta_\phi$ receives a fresh new color that does not appear elsewhere. Formally, these are exactly the vertices in
\begin{equation*}
\begin{split}
Z &= \{ a_i,b_i \mid 1 \leq i \leq n \} \\
\quad &\cup \{ p_j,q'_j,r_{j,1},r_{j,2},r_{j,3},r_{j,4},r_{j,5} \mid 1 \leq j \leq m \} \\
\quad &\cup \{ h_q \mid 2 \leq q \leq 7 \} \\
\quad &\cup \{ h_{j,q} \mid 1 \leq j < m \wedge 1 \leq q \leq 7\}.
\end{split}
\end{equation*}
As each gadget is cubic and triangle-free, the graph $G^\Delta_\phi$ is cubic and triangle-free. Consider a clause gadget $C^\Delta_j$, and a vertex rainbow path $R$ traversing from $p_j$ to $q'_j$ in it. It can be observed that because $R$ cannot choose either $v'_{1,1}$ or $v_{2,4}$, it must choose at least one of the vertices $w_{j,\ell}$, where $\ell \in [3]$. Then, Lemma~\ref{lem:parametric_gadget} together with an argument similar to Lemma~\ref{lem:g_rainbow_conn} gives the theorem.
\end{proof}
Furthermore, given a positive instance $\phi$ of \occsat, it can be observed every pair of vertices is connected by a vertex rainbow shortest path, giving us the following.
\begin{theorem}
\label{thm:srvc_cubic_npc}
\probSrvc is $\NP$-complete when restricted to the class of triangle-free cubic graphs.
\end{theorem}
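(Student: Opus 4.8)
The plan is to reuse the construction $G^\Delta_\phi$ of Theorem~\ref{thm:rvc_cubic_npc} verbatim and to strengthen its conclusion from ``rainbow vertex connected'' to ``strongly rainbow vertex connected'', exactly in the spirit of Lemmas~\ref{lem:g_srvc_iff_st_path} and~\ref{lem:g_rainbow_conn}. Membership in $\NP$ is already recorded in Subsection~\ref{sec:overview}. One direction of the reduction is immediate: a strong vertex rainbow coloring is in particular a vertex rainbow coloring, so if $G^\Delta_\phi$ is strongly rainbow vertex connected under its coloring then it is rainbow vertex connected, and $\phi$ is satisfiable by Theorem~\ref{thm:rvc_cubic_npc}. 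Thus everything reduces to proving that \emph{if $\phi$ is satisfiable then every pair of vertices of $G^\Delta_\phi$ is joined by a vertex rainbow shortest path}.

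The fact that makes this work — and the reason no modification of the gadgets is needed here, unlike in Theorem~\ref{thm:srvc_interval_npc} — is that in a parametric gadget $T_k$ the distance $d(v_s,v_t)$ is exactly $4k+1$, which is also the length of each of its two sides. Hence every ``choose-all-on-one-side'' rainbow path constructed in the forward direction of Theorem~\ref{thm:rvc_cubic_npc} (a positive or negative $X^\Delta_i$ path, a $w_{j,\ell}$-selection inside $C^\Delta_j$, the traversal of the tail gadget) is automatically a \emph{shortest} path between its attachment points, and the same holds for the short routes used through the dummy and head gadgets. Consequently, given a satisfying assignment $\vect{\alpha}$, the global path obtained by concatenating these pieces through the subdivision vertices $f_j$ and the connecting edges is a vertex rainbow \emph{shortest} path between the two distinguished endpoints (the analogues of $s_0$ and $t$ in Theorem~\ref{thm:rvc_bip_planar}); that this path is rainbow is exactly the argument already given there and in Lemma~\ref{lem:g_rainbow_conn}, combined with Lemma~\ref{lem:parametric_gadget}.

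It then remains only to verify, as in the three cases of Lemma~\ref{lem:g_srvc_iff_st_path}, that once this single long shortest path is rainbow, every other pair of vertices is joined by a vertex rainbow shortest path. I would partition $V(G^\Delta_\phi)$ into the tail gadget, the union of the variable gadgets, the union of the clause gadgets together with their subdivision vertices and dummy gadgets, and the head gadget; pairs inside one part are handled exactly as in Lemma~\ref{lem:g_srvc_iff_st_path}, while for a pair in two different parts one takes the appropriate subpath of the long path and completes it by a short local detour avoiding the two blocking vertices $v'_{1,1}$ and $v_{2,4}$ (colored $c'_j$ and $c_j$) of each clause gadget — the same device as the subpaths $P'$ and $P''$ in Cases~2 and~3 of Lemma~\ref{lem:g_srvc_iff_st_path}. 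The main obstacle, and essentially the only thing needing care, is the distance bookkeeping inside $T_k$: the crossing chords $(v_{k,1},v'_{k,2})$, $(v_{k,2},v'_{k,1})$, $(v_{k,3},v'_{k,4})$, $(v_{k,4},v'_{k,3})$ do create shortcuts, so one must confirm that each completing detour and each within-part path is of genuinely minimum length in $G^\Delta_\phi$ and introduces no color collision with the prefix it is attached to. This is a finite inspection of the parametric, dummy and head gadgets, entirely analogous to the checks already carried out here for \probRvc and in~\citep{Lauri2015} for \probRc and \probSrc, so I would present it as a short verification rather than a fresh argument.
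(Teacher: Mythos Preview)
Your proposal is correct and follows exactly the paper's approach: the paper simply states that, given a positive instance $\phi$, ``it can be observed every pair of vertices is connected by a vertex rainbow shortest path'', and then records Theorem~\ref{thm:srvc_cubic_npc} as an immediate consequence of the construction of Theorem~\ref{thm:rvc_cubic_npc}. You have spelled out in more detail the verification the paper leaves implicit --- in particular the key observation that $d(v_s,v_t)=4k+1$ in $T_k$, so the one-sided rainbow paths produced by Lemma~\ref{lem:parametric_gadget} are already shortest --- but the argument is the same.
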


\subsection{$k$-regular graphs}
\label{sec:regular}
In this subsection, we show both \probRvc and \probSrvc remain $\NP$-complete on $k$-regular graphs, where $k \geq 4$. Our plan is to use the construction of Section~\ref{sec:cubic}, but add dummy vertices in a controlled manner to increase the degree of each vertex. In particular, we will need two operations detailed next.

Let $u$ and $v$ be two adjacent vertices such that $\deg(u) = \deg(v) = 3$. Let $d \geq 1$ be a constant, and consider the following \emph{degree increment} operation. We introduce a set of vertices $X = \{x_1,\ldots,x_d\}$ along with the edges $\{ (u,x), (v,x), (x,x') \mid x,x' \in X \}$. In other words, the vertices $\{u,v\} \cup X$ form a clique of size $d+2$. For each $x \in X$, we introduce a clique $W_x$ on $d+4$ new vertices $w^x_1,\ldots,w^x_{d+4}$ with an edge removed, say $(w^x_1,w^x_2) \notin W_x$. Finally, for each $x \in X$, we add the edges $(x,w^x_1)$ and $(x,w^x_2)$. We can then verify both $u$ and $v$ have degree $d+3$. Furthermore, every new vertex we added has degree $d+3$. The degree increment with $d=1$ applied to two vertices $u$ and $v$ is illustrated in Figure~\ref{fig:degree_inc_gadgets}~(a).

The degree increment operation suffices to show \probRvc is $\NP$-complete on $k$-regular graphs for $k \geq 4$. However, for \probSrvc we need to be careful not to change certain distances in our construction. For this reason, we will need an additional \emph{detour gadget} $D_{d,l}$. A building block $B$ of a detour gadget $D_{d,l}$ is the complete graph $K_{d-1}$ with two universal vertices added. The graph $B$ has $d-1$ vertices of degree $d$, and two vertices of degree $d-1$. By chaining such graphs $B$ together by adding an edge between the vertices of degree $d-1$, we obtain a detour gadget $D_{d,l}$ for which it holds that the degree of every vertex is $d$ except for two vertices that have degree $d-1$, and the diameter is $l = 2+3p$, for some $p \in \mathbb{N}^+$. A detour gadget $D_{4,5}$ is shown in Figure~\ref{fig:degree_inc_gadgets}~(b).

\begin{figure}
\bsubfloat[]{%
  \includegraphics[scale=1]{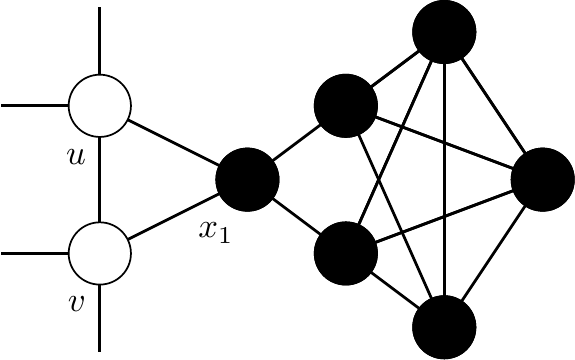}%
}
\bsubfloat[]{%
  \includegraphics[scale=1]{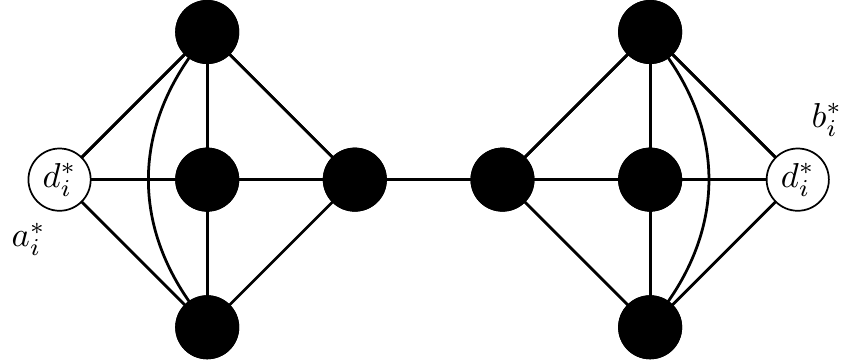}%
}
\bsubfloati\qquad\bsubfloatii
\caption{\textbf{(a)} The degree increment operation applied to $\{u,v\}$ with $d=1$, where $u$ and $v$ have degree~3. The unlabeled vertices correspond to $w^{x_1}_1,\ldots,w^{x_1}_5$. \textbf{(b)} A detour gadget $D_{4,5}$ of diameter~5.}
\label{fig:degree_inc_gadgets}
\end{figure}

We are then ready to proceed with our claim.
\begin{theorem}
\label{thm:rvc_srvc_regular_npc} 
Both \probRvc and \probSrvc are $\NP$-complete when restricted to the class of $k$-regular graphs, for every $k \geq 4$.
\end{theorem}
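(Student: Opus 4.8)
The plan is to build on the triangle-free cubic construction $G^\Delta_\phi$ from Theorems~\ref{thm:rvc_cubic_npc} and~\ref{thm:srvc_cubic_npc}, raising the degree of every vertex from $3$ to $k$ while preserving the essential connectivity behaviour. First I would fix $k \geq 4$ and set $d = k-3 \geq 1$. For \probRvc the idea is to apply the degree increment operation to a suitable set of edges of $G^\Delta_\phi$ so that every original vertex, and every newly introduced vertex, ends up with degree exactly $k$; one must choose a set of edges forming a structure (e.g. a perfect matching, or more carefully a set of edges covering each vertex the right number of times) so that the degrees balance out. The key point to check is that the gadgets $W_x$ attached in the degree increment are cliques with one edge removed, hence have diameter $2$, and each $x$ has its two neighbours $w^x_1, w^x_2$ at distance $2$ from each other inside $W_x$; since all new vertices receive fresh distinct colours, every pair of new vertices is rainbow vertex connected, and more importantly no shortest (or rainbow) path between two original vertices has any reason to detour through a $W_x$ block or through the clique $\{u,v\}\cup X$, because doing so is never shorter and the fresh colours give no advantage. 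Thus the reachability of the special pair $s_0, t$ (equivalently $v_s$ of the tail gadget and $h_1$ of the head gadget) is unchanged, and the equivalence with satisfiability of $\phi$ carries over verbatim from Lemma~\ref{lem:g_rainbow_conn} together with Lemma~\ref{lem:parametric_gadget}.

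For \probSrvc the subtlety is that the degree increment changes distances: inserting the clique $\{u,v\}\cup X$ leaves $d(u,v)=1$, so that is harmless, but we must be sure that no pair of vertices originally at distance $D$ now has a shortcut of length $<D$ through the new material, and conversely that the relevant shortest paths are still realizable as vertex rainbow paths. Because the degree increment only adds edges incident to newly-created vertices or the edge-endpoints $u,v$ themselves (which were already adjacent), it creates no new shortcut between original vertices at all; the only genuine change is that a vertex $x \in X$ sits at distance $1$ from both $u$ and $v$. Where we do need to lengthen a path — namely if at some point the construction required two specific vertices to remain at a prescribed distance that the added edges would otherwise shorten — we splice in a detour gadget $D_{d,l}$, replacing a direct edge by a path through $D_{d,l}$ of the appropriate length $l = 2 + 3p$; its internal vertices again get fresh colours, and since every internal vertex of $D_{d,l}$ has degree $d$ while the two endpoints have degree $d-1$, attaching it in place of an edge raises those two endpoints' degrees back to $k$. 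Carrying out the degree bookkeeping simultaneously for the increments and the detour splices — so that the final graph is exactly $k$-regular — is the part demanding the most care.

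I would then state the two lemmas analogous to Lemma~\ref{lem:g_srvc_iff_st_path} and Lemma~\ref{lem:g_rainbow_conn}: (i) the resulting graph is (strongly) rainbow vertex connected iff the special pair is, which holds because every added block introduces only fresh colours and is traversable internally by a short rainbow path, hence contributes nothing obstructive to any other pair; and (ii) the special pair is (strongly) rainbow vertex connected iff $\phi$ is satisfiable, which is inherited from the cubic construction since the degree increment and detour operations do not touch the colours $c_{i,\ell}$, $\overline{c}_{i,\ell}$, $c_j$, $c'_j$, or the blocking colours $c^*$, and do not alter the relevant distances. Finally one checks the construction is polynomial-time computable and that the graph is genuinely $k$-regular and simple. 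The main obstacle, as noted, is the global degree accounting: ensuring that after applying degree increments to some edges and detour-splices to others, \emph{every} vertex — original, increment-clique, $W_x$-block, and detour-block — has degree exactly $k$, which requires choosing the set of modified edges and the detour parameters in a mutually consistent way.
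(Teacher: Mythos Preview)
Your high-level strategy --- lift the cubic graph $G^\Delta_\phi$ to a $k$-regular one via the degree-increment operation, with all new vertices getting fresh colours --- is correct and in fact cleaner than what the paper does. If you pick a perfect matching of $G^\Delta_\phi$ and apply the increment with $d=k-3$ to every matching edge, every original vertex and every new vertex ends up with degree exactly $k$, distances between original vertices are untouched, and the equivalence with satisfiability of $\phi$ carries over for both \probRvc and \probSrvc. The one thing you gloss over is exhibiting such a matching: $G^\Delta_\phi$ has many cut vertices and bridges, so Petersen's theorem does not apply, and you should construct the matching explicitly gadget-by-gadget (this is easy but not automatic).

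Where your proposal goes wrong is the discussion of detour gadgets. You describe them as replacing an edge by a longer path ``where we need to lengthen a path \ldots\ that the added edges would otherwise shorten'', but you yourself observe a few lines earlier that the degree increment creates no shortcuts --- so under your own reasoning the detour gadget is never invoked. More importantly, replacing an edge $(u,v)$ by a path leaves $\deg(u)$ unchanged, contradicting your claim that it ``raises those two endpoints' degrees back to $k$''. This is not how the paper uses $D_{d,l}$: there the detour is \emph{added} as a fresh path between two \emph{non-adjacent} hub vertices such as $a_i$ and $b_i$ (which sit at distance~$13$), to raise their degree; the diameter is chosen just short enough that the detour is never a shortest path, and crucially both attachment points $a^*_i,b^*_i$ receive the \emph{same} new colour $d^*_i$ so that no vertex-rainbow path can traverse the detour at all. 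You miss this colouring trick entirely. The paper needs detours because it pairs vertices only within gadgets and must then handle leftover hub vertices $a_i,b_i,p_j,q'_j,v_s$ separately; your perfect-matching route sidesteps all of that, so you should simply drop the detour discussion rather than try to force it in.
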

\begin{proof}
Consider the vertex-colored cubic graph $G^\Delta_\phi$ constructed in the proof of Theorem~\ref{thm:rvc_cubic_npc}. Through degree increment operations and addition of detour gadgets, we will transform the cubic graph $G^\Delta_\phi$ into a $k$-regular graph $G^*_\phi$, for any $k \geq 4$. Consider a variable gadget $X^\Delta_i$. We divide the vertices $v_{1,1},v_{1,2},\ldots,v_{3,4}$ into six pairs $\{v_{1,1},v_{1,2}\},\ldots,\{v_{3,3},v_{3,4}\}$. Similarly, the vertices $v'_{1,1},v'_{1,2},\ldots,v'_{3,4}$ are divided into six pairs $\{v'_{1,1},v'_{1,2}\},\ldots,\{v'_{3,3},v'_{3,4}\}$. For each of the altogether 12 pairs, we apply the degree increment operator with $d = k - 3$. We repeat this for each variable gadget in $G^\Delta_\phi$, and color each vertex arising from the operation with a fresh new color. Finally, consider the vertices $a_i$ and $b_i$ in a variable gadget $X^\Delta_i$. As the distance $d(a_i,b_i) = 13$, we introduce a detour gadget $D_{k,11}$ whose vertices of degree $k-1$ are named $a^*_i$ and $b^*_i$. By adding the edges $(a_i,a^*_i)$ and $(b_i,b^*_i)$ we ensure $d(a_i,b_i)$ remains equal to~13. For each detour gadget $D_{k,11}$, we introduce a new color $d^*_i$ that does not appear anywhere else. We color both $a^*_i$ and $b^*_i$ with color $d^*_i$. Every vertex other than $a^*_i$ and $b^*_i$ receives a fresh distinct color. This ensures that an argument similar to Lemma~\ref{lem:g_rainbow_conn} holds: no vertex rainbow path can pass through a detour gadget, as both $a^*_i$ and $b^*_i$ have the same color. For Lemma~\ref{lem:g_srvc_iff_st_path} to hold, it is enough to observe no vertex rainbow (shortest) path needs to have both $a^*_i$ and $b^*_i$ as its internal vertices. Indeed, for the remainder of the construction, each detour gadget will follow the same coloring scheme.

Let us then consider a clause gadget $C^\Delta_j$ of $G^\Delta_\phi$. Without loss, we can assume the given \occsat formula $\phi$ only contains clauses of size two and three. Indeed, clauses of size one can be removed by unit propagation. Consider the vertices $p_j$ and $q'_j$ in $C^\Delta_j$. When the corresponding clause is of size two, $d(p_j,q'_j) = 7$. Thus, similarly as above with a variable gadget, we add a detour gadget $D_{k,5}$, and connect it to $p_j$ and $q'_j$. Otherwise, the corresponding clause is of size three, and $d(p_j,q'_j) = 9$. In the obvious way, we can extend the length of the clause gadget $C^\Delta_j$ such that $d(p_j,q'_j) = 10$ by breaking the triangle-freeness of the gadget. The two vertices added to the clause gadget for this purpose receive fresh distinct colors. Then, we add a detour gadget $D_{k,8}$, and connect it with the clause gadget in the already described manner. 

Consider then a vertex $f_j$ connecting two clause gadgets, for $j \in [m-1]$. We divide $f_j$ along with its dummy gadget into four pairs of vertices, and apply the degree increment operation for each with $d=k-3$. In a similar fashion, we increase the degree of each vertex in the tail gadget, also possibly extending its length to accommodate for a detour gadget. For simplicity, we replace the head gadget as follows. We delete the vertices $h_q$ for $2 \leq q \leq 7$, and identify $h_1$ with $p_{j+1}$ of a new clause gadget $C^\Delta_{j+1}$. Each vertex of $C^\Delta_{j+1}$ receives a fresh new color (so $h_1$ still has color $c'_m$). As above, we increase the degree of each vertex in $C^\Delta_{j+1}$. 

At this point, for the obtained graph $G^*_\phi$, it holds that every vertex has degree $k$, except for two vertices $v_s$ in the tail gadget, and $q'_{j+1}$ in the clause gadget $C^\Delta_{j+1}$ replacing the head gadget. To finish the construction, we connect a detour gadget with $v_s$ and $q'_{j+1}$, extending $C^\Delta_{j+1}$ in the obvious way if necessary. This completes the proof.
\end{proof}

\section{Tractability considerations}
\label{sec:tractability_considerations}
In this section, we consider both \probRvc and \probSrvc from a structural viewpoint. We pinpoint graph classes for which both problems can be solved in polynomial time. Furthermore, we consider implications of our hardness results for parameterized algorithms, along with some positive parameterized results.

\subsection{Polynomial time solvable cases}
\label{sec:poly_cases}
A graph is said to be \emph{geodetic} if there is a unique shortest path between every pair of its vertices. It was proven by~\citet{Stemple1968} that a connected graph $G$ is geodetic if and only if every block of $G$ is geodetic. Indeed, we have the following.
\begin{observation}
\label{obs:block_geodetic}
A block graph is geodetic.
\end{observation}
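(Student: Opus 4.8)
The plan is to invoke Stemple's characterization stated just above: a connected graph $G$ is geodetic if and only if every block of $G$ is geodetic. Since we are given that in a block graph every block is a clique, it suffices to observe that a clique $K_n$ is geodetic. This is immediate: between any two distinct vertices $u,v$ of $K_n$ the unique shortest path is the single edge $uv$ (of length $1$), and the trivial path from a vertex to itself is of length $0$; there is no shorter alternative, so shortest paths are unique. Hence every block of a block graph is geodetic, and by Stemple's theorem the block graph itself is geodetic.

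Concretely, I would write: let $G$ be a block graph. By definition each block $B$ of $G$ is a clique, say $B \cong K_{n_B}$. For any two vertices $x,y$ in $B$, if $x=y$ the only shortest path is the empty one, and if $x\neq y$ then $xy$ is an edge, so the unique shortest path between them is the edge $xy$; thus $B$ is geodetic. Since this holds for every block of $G$, Observation~\ref{obs:block_geodetic} follows from the result of~\citet{Stemple1968} quoted above.

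I do not anticipate any real obstacle here — the statement is essentially a one-line corollary of the cited Stemple characterization together with the defining property of block graphs. The only thing to be slightly careful about is the degenerate case of a block that is a single edge ($K_2$) or, in principle, a trivial block; these are still geodetic for the same reason. One might also remark in passing that this observation is exactly what makes \probSrvc tractable on block graphs, since strong rainbow vertex connectivity then has an unambiguous shortest path to check for each pair, but that discussion belongs to the algorithmic development that follows rather than to the proof of the observation itself.
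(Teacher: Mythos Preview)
Your proposal is correct and follows exactly the route the paper takes: the paper simply quotes the Stemple--Watkins characterization that a connected graph is geodetic iff every block is geodetic, and treats the observation as immediate from this (without even spelling out that cliques are geodetic). Your write-up is, if anything, slightly more detailed than the paper's one-line justification.
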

This immediately leads us to the following result.
\begin{corollary}
\label{cor:srvc_block_poly}
\probSrvc is solvable in polynomial time when restricted to the class of block graphs.
\end{corollary}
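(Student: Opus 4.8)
The plan is to use Observation~\ref{obs:block_geodetic}: in a block graph $G$, between every pair of vertices $u,v$ there is a \emph{unique} shortest path $P_{uv}$. This removes all the ambiguity that makes \probSrvc hard in general, because to check strong rainbow vertex connectivity we no longer get to choose among several shortest paths — there is only one candidate per pair, and we simply have to test whether its internal vertices are rainbow.

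The algorithm I would describe is the following. First compute all-pairs shortest paths in $G$; since $G$ is geodetic, for each ordered pair $(u,v)$ this yields the unique shortest $u$-$v$ path $P_{uv}$ together with its internal vertices. (Concretely, one can run a BFS from every vertex; in a geodetic graph the BFS tree is the unique shortest-path tree rooted at that vertex, so reading off the path from the tree is unambiguous.) Then, for each of the $\binom{n}{2}$ pairs $\{u,v\}$, scan the internal vertices of $P_{uv}$ and check via the coloring $\psi$ whether any color is repeated; this takes $O(n)$ time per pair. If every pair's unique shortest path is vertex rainbow, answer \textsc{yes}; otherwise answer \textsc{no}. Correctness is immediate from the definition of strong rainbow vertex connectedness and the uniqueness of shortest paths: $G$ is strongly rainbow vertex connected under $\psi$ precisely when, for every pair, \emph{some} shortest path is vertex rainbow, and here ``some'' equals ``the'' shortest path. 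The total running time is polynomial, e.g. $O(nm + n^3)$ or better.

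There is essentially no main obstacle here — the corollary is a direct consequence of Observation~\ref{obs:block_geodetic} (which in turn follows from Stemple's characterization, already quoted, once one recalls that a clique $K_r$ is trivially geodetic, so every block of a block graph is geodetic). The only point that deserves a sentence is why geodecity makes the \emph{strong} variant easy while saying nothing about the ordinary \probRvc variant: for \probRvc one is allowed to use arbitrary (non-shortest) rainbow paths, and block graphs may have many such paths between a pair, so the uniqueness argument does not apply — which is consistent with \probRvc on block graphs being handled separately in Section~\ref{sec:tractability_considerations}. For the proof of the corollary as stated, it suffices to observe that by Observation~\ref{obs:block_geodetic} each vertex pair has a unique shortest path, that this path can be found in polynomial time, and that checking it for a repeated internal color is trivial.
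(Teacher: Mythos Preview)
Your proposal is correct and matches the paper's approach exactly: the paper simply states Observation~\ref{obs:block_geodetic} (block graphs are geodetic) and then declares that the corollary follows immediately, with no further argument. Your write-up is in fact more detailed than the paper's own justification, spelling out the obvious polynomial-time check of the unique shortest path for each pair.
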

More generally, a graph is said to be \emph{$k$-geodetic} if there are at most $k$ shortest paths between every pair of vertices. Quite trivially, \probSrvc is solvable in polynomial time on such graphs. This includes e.g., bigeodetic graphs~\citep{Srinivasan1988} (that is, $k=2$).

It is known that \probRc is $\NP$-complete for the class of block graphs. However, it turns out this is not the case for \probRvc. Indeed, the following lemma suggests a straightforward algorithm for the problem.
\begin{lemma}
Two distinct vertices $s$ and $t$ are rainbow vertex connected in a vertex-colored block graph if and only if each cut vertex on the unique $s$-$t$ shortest path has a distinct color.
\end{lemma}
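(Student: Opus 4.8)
The plan is to exploit the structure of block graphs via Observation~\ref{obs:block_geodetic}: every block graph is geodetic, so there is a \emph{unique} shortest $s$-$t$ path $P$, and it is this path alone that we must analyze. First I would recall the standard fact about block graphs that the unique $s$-$t$ path passes through a well-defined sequence of blocks $B_1, B_2, \ldots, B_r$, where consecutive blocks share exactly one vertex, and these shared vertices are precisely the cut vertices lying on $P$; moreover within each block $B_\ell$ the path uses at most one edge (two vertices), since a block is a clique. Thus the internal vertices of $P$ are exactly the cut vertices of $G$ on $P$, together with $s$ and $t$ if they happen to be internal — but endpoints are never internal vertices of the path they terminate, so the internal vertices of $P$ are exactly those cut vertices. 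This already gives the easy direction: if each cut vertex on $P$ has a distinct color, then $P$ itself is a vertex rainbow path, so $s$ and $t$ are rainbow vertex connected.

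For the converse, suppose two cut vertices $x, y$ on $P$ receive the same color; I must show that \emph{no} $s$-$t$ path is vertex rainbow, not merely that $P$ is not. The key claim is that every $s$-$t$ path in $G$ must pass through both $x$ and $y$ as internal vertices. This follows because $x$ and $y$ are cut vertices whose removal separates $s$ from $t$: each cut vertex on the unique shortest path lies on \emph{every} $s$-$t$ path (removing it disconnects the side containing $s$ from the side containing $t$ — here I would invoke the block-cut tree, in which $x$ and $y$ are articulation points that lie on the tree-path between the blocks containing $s$ and $t$, hence separate them). Consequently any $s$-$t$ path $Q$ contains $x$ and $y$; since $x$ and $y$ are distinct from $s$ and $t$ and from each other, and $Q$ enters and leaves each of them, both are internal vertices of $Q$. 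As $\psi(x) = \psi(y)$, the path $Q$ is not vertex rainbow. Hence $s$ and $t$ are not rainbow vertex connected.

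The only subtlety — and the main thing to get right rather than a genuine obstacle — is the edge case where one of the relevant vertices coincides with $s$ or $t$: a cut vertex on $P$ that equals $s$ or $t$ is not an internal vertex of $P$, so the statement correctly does not constrain its color; and in the converse argument one must note that if, say, $x = s$, then $x$ being repeated-colored with some interior $y$ still forces $Q$ through $y$ as an internal vertex, but $s$ itself is never internal, so this case simply cannot arise from the hypothesis "two cut vertices \emph{on the path} (i.e.\ internal) share a color." I would state the correspondence "internal vertices of the unique $s$-$t$ shortest path $=$ cut vertices of $G$ lying strictly between $s$ and $t$ on it" explicitly to dispatch this cleanly, and then the equivalence follows immediately in both directions.
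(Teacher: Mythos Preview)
Your proposal is correct and follows essentially the same approach as the paper: both hinge on the fact that the internal vertices of the unique shortest $s$--$t$ path are precisely the cut vertices separating $s$ from $t$, and hence lie on \emph{every} $s$--$t$ path. You are more explicit than the paper about why this structural fact holds (via the block--cut tree) and about the endpoint edge cases, whereas the paper simply asserts that ``any $s$-$t$ path uses every cut vertex on $P$'' and calls the remaining direction trivial.
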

\begin{proof}
The vertices $s$ and $t$ are rainbow vertex connected regardless of the underlying vertex coloring if $d(s,t) \leq 2$. So we can assume $d(s,t) \geq 3$. Recall that by Observation~\ref{obs:block_geodetic}, the shortest path between $s$ and $t$ is unique. We will then show that if $s$ and $t$ are rainbow vertex connected, then each cut vertex on the unique shortest $s$-$t$ path $P$ has received a different color. Suppose not, i.e., $s$ and $t$ are rainbow vertex connected, but at least two cut vertices on $P$ share the same color. But because any $s$-$t$ path uses every cut vertex on $P$, we have a contradiction. The other direction is trivial. 
\end{proof}
In other words, a vertex-colored block graph is rainbow vertex connected if and only if it is strongly rainbow vertex connected. Thus, the previous lemma establishes the following.
\begin{corollary}
\label{cor:rvc_block_poly}
\probRvc is solvable in polynomial time when restricted to the class of block graphs.
\end{corollary}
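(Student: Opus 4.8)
The plan is simply to algorithmize the characterization provided by the previous lemma. Let $G$ be a vertex-colored block graph on $n$ vertices and $m$ edges, with coloring $\psi$. First I would compute the set of cut vertices of $G$, together with its block decomposition; this is standard and takes $O(n+m)$ time via depth-first search. Next, for each of the $\binom{n}{2}$ unordered pairs $\{s,t\}$ of distinct vertices, I would find the unique shortest $s$-$t$ path $P_{s,t}$ — uniqueness is guaranteed by Observation~\ref{obs:block_geodetic} — for instance by a breadth-first search rooted at $s$, trace $P_{s,t}$, collect the colors assigned by $\psi$ to those internal vertices of $P_{s,t}$ that are cut vertices of $G$, and test whether this list of colors contains a repetition (e.g.\ by sorting, or by marking colors in an auxiliary array). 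By the previous lemma, $s$ and $t$ are rainbow vertex connected under $\psi$ if and only if no repetition occurs; hence $G$ is rainbow vertex connected under $\psi$ if and only if the test passes for every pair. Answering ``yes'' precisely in that case decides \probRvc, and the whole procedure runs in $O(n^2(n+m))$ time, which is polynomial.

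Correctness is immediate from the previous lemma, so there is essentially no obstacle beyond the lemma itself; the one point worth remarking on is why the lemma isolates the correct obstruction. Every $s$-$t$ path is forced to pass through exactly the cut vertices lying on $P_{s,t}$, while an internal non-cut vertex of $P_{s,t}$ can be bypassed within its block; this is what makes the local, per-path color test both necessary and sufficient. The remaining ingredients — linear-time cut-vertex detection and shortest-path computation in the geodetic graph $G$ — are entirely routine, so the main ``difficulty'' has already been discharged by the structural lemma. I would also note in passing that the same reasoning shows a vertex-colored block graph is rainbow vertex connected if and only if it is strongly rainbow vertex connected, consistent with Corollary~\ref{cor:srvc_block_poly}.
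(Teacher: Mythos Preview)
Your proposal is correct and follows essentially the same approach as the paper: the corollary is derived directly from the preceding lemma by turning its characterization into a straightforward per-pair polynomial-time check, and you also note the equivalence with strong rainbow vertex connectivity, exactly as the paper does. One small remark: in a block graph every internal vertex of the unique shortest $s$-$t$ path is in fact a cut vertex, so your extra filter ``internal vertices that are cut vertices'' is harmless but redundant, and your aside about bypassing non-cut internal vertices never actually arises.
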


In the $st$-version of \probSrvc, the input has two additional vertices $s$ and $t$. The task is to decide whether there is a vertex rainbow shortest path between $s$ and $t$ in the graph $G$. Let us refer to this problem as \probstSrvc. We define the problem \probstRvc analogously.
\begin{lemma}
The \probstSrvc problem for cactus graphs reduces to the \probstRvc problem for cactus graphs.
\end{lemma}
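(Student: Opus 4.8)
The plan is to reduce \probstSrvc on cactus graphs to \probstRvc on cactus graphs by a local surgery that destroys all the ``long'' shortest paths between $s$ and $t$ while preserving the existence of a vertex rainbow shortest $s$-$t$ path. Since a cactus graph is not geodetic in general, the difficulty is that an $s$-$t$ path in \probstRvc need not be a shortest path; to emulate the shortest-path requirement we must block all strictly-shortest-path-avoiding detours. The key structural fact I would exploit is that in a cactus every edge lies in at most one cycle, so the collection of cycles the $s$-$t$ shortest paths can traverse is very restricted: the unique cycles encountered along any $s$-$t$ walk form a linear ``chain'' of blocks, and within each such cycle $B$ there are exactly two arcs between its two entry/exit points, one of which is a shortest arc (or both, if the cycle has even length and the entry points are antipodal).

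First I would compute, in each block $B$ on the $s$-$t$ block path, the two entry/exit vertices $u_B, v_B$ and the two arcs $P_B, P_B'$ of $B$ joining them, and record which is shorter. If exactly one is shorter, I would \emph{delete} the longer arc's interior vertices; this preserves the cactus property (we only remove a path inside a single cycle, leaving a path between $u_B$ and $v_B$) and forces any $s$-$t$ path in the new graph to use the unique shortest arc there. If both arcs are shortest (even cycle, antipodal entry points), both must be retained, and I would leave $B$ untouched; in this case either arc gives a shortest $s$-$t$ path in the original graph, so a rainbow $s$-$t$ path in the reduced graph still corresponds to a rainbow shortest path in the original. For bridge edges on the $s$-$t$ path nothing needs to be done. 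After this surgery the only $s$-$t$ paths that survive are exactly the original shortest $s$-$t$ paths, and the graph is still a cactus; pass along the same vertex-coloring (restricted to the surviving vertices). Then $G$ has a vertex rainbow shortest $s$-$t$ path if and only if the reduced cactus has a vertex rainbow $s$-$t$ path, which is precisely an instance of \probstRvc on cactus graphs. The reduction is clearly polynomial-time, since the block-cut tree and all distances are computable in linear time and each surgery step is local.

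The main obstacle I expect is the careful bookkeeping around cycles where the two $s$-$t$ arcs have equal length: one must verify that not deleting such a cycle is harmless, i.e., that any vertex rainbow $s$-$t$ path in the reduced graph is still a \emph{shortest} path in the original graph (it is, because each retained arc in such a cycle is a shortest arc, and concatenating shortest segments across the block chain yields an $s$-$t$ path of the correct total length), and conversely that a rainbow shortest path in the original survives the surgery (it does, since it never uses the deleted longer arcs). A secondary point to check is that deleting interior vertices of an arc cannot disconnect $s$ from $t$ or change the distance in a way that makes some previously-non-shortest path become shortest — but since we only ever remove the strictly longer of two arcs in a cycle, the distance $d(s,t)$ is unchanged and no new shortest paths are created. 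Handling pendant structure hanging off the $s$-$t$ block path requires no attention, as such vertices are never internal to any $s$-$t$ path. Once these verifications are in place, the equivalence of the two instances is immediate and the lemma follows.
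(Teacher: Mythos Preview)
Your proposal is correct and follows essentially the same approach as the paper: prune the cactus so that the surviving $s$--$t$ paths are exactly the shortest $s$--$t$ paths of the original graph, then hand the result to \probstRvc. The paper phrases the pruning more succinctly --- it simply deletes every vertex $w$ with $d(s,w)+d(w,t)\neq d(s,t)$ using two breadth-first searches, then notes that only even cycles (your ``antipodal'' case) survive --- whereas you arrive at the same reduced instance by explicitly walking the block-cut path and excising the longer arc in each cycle block. Your version additionally retains pendant structure off the $s$--$t$ block path, but as you observe this is harmless since such vertices lie on no simple $s$--$t$ path; apart from that cosmetic difference, the two reductions coincide.
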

\begin{proof}
Let $I = (G,\psi,s,t)$ be an instance of \probstSrvc, where $G$ is a cactus graph, and $\psi$ its vertex-coloring. In polynomial time, we will construct an instance $I' = (G',\psi',s,t)$ of \probstRvc where $G'$ is a cactus graph such that $I$ is a YES-instance of \probstSrvc if and only if $I'$ is a YES-instance of \probstRvc.

To construct $I'$, we first let $G' = G$ and $\psi' = \psi$. Then, we delete from $G'$ every vertex $w$ such that $d(s,w) + d(w,t) \neq d(s,t)$. This is achieved by running two breadth-first searches; one from $s$ and one from $t$, recording the distance to every other vertex. In other words, $G'$ contains only vertices that appear on some shortest $s$-$t$ path. Clearly, the property of being a cactus graph is closed under vertex deletion. Thus, $G'$ is a cactus graph. By observing precisely cycles of even length are preserved in $G'$, it is straightforward to verify that $I$ is a YES-instance of \probstSrvc if and only if $I'$ is a YES-instance of \probstRvc.
\end{proof}
It is shown by~\citet{Uchizawa2013} that \probstRvc can be solved in polynomial time for outerplanar graphs, which form a superclass of cacti. By applying the above reduction to each pair of vertices, we obtain the following.
\begin{corollary}
\label{cor:srvc_cactus_poly}
\probSrvc is solvable in polynomial time when restricted to the class of cactus graphs.
\end{corollary}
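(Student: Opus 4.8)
The plan is to exhibit a polynomial-time reduction from \probSrvc on cactus graphs to many instances of the $st$-version \probstSrvc on cactus graphs, and then to chain the two lemmas already established in this subsection: the reduction from \probstSrvc to \probstRvc on cactus graphs, and the result of~\citet{Uchizawa2013} that \probstRvc is polynomial-time solvable on outerplanar graphs (hence on cacti). Concretely, given a vertex-colored cactus graph $G$ with coloring $\psi$, I would observe that $G$ is strongly rainbow vertex connected under $\psi$ if and only if, for every unordered pair $\{s,t\}$ of distinct vertices, the instance $(G,\psi,s,t)$ is a YES-instance of \probstSrvc. There are only $\binom{|V(G)|}{2}$ such pairs, so solving \probstSrvc for each pair and taking the conjunction of the answers decides \probSrvc on $G$ in polynomial time, provided each individual call runs in polynomial time.

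The key steps, in order, are as follows. First I would state the equivalence ``$G$ is strongly rainbow vertex connected $\iff$ every vertex pair $\{s,t\}$ gives a YES-instance of \probstSrvc'', which is immediate from the definition of strong rainbow vertex connectedness (there is a vertex-rainbow \emph{shortest} path between every pair). Second, for each fixed pair $\{s,t\}$, I would invoke the lemma already proved above: \probstSrvc on a cactus graph reduces in polynomial time to \probstRvc on a cactus graph, where the reduced graph $G'$ is obtained by deleting every vertex not lying on some shortest $s$-$t$ path and $G'$ remains a cactus. Third, I would invoke the result of~\citet{Uchizawa2013} that \probstRvc (the $st$-version of \probRvc) is solvable in polynomial time on outerplanar graphs; since every cactus graph is outerplanar (as noted in Section~\ref{sec:preliminaries}), this gives a polynomial-time decision procedure for each \probstRvc instance. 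Composing these three steps, each of the $O(n^2)$ pairs is decided in polynomial time, so the total running time is polynomial, which establishes the corollary.

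I do not expect a genuine obstacle here, since all the heavy lifting has been done by the preceding lemma and by~\citet{Uchizawa2013}; the corollary is essentially a bookkeeping argument. The one point that deserves a sentence of care is verifying that the overall procedure really is polynomial: one must note that the reduction from \probstSrvc to \probstRvc is itself polynomial-time (two breadth-first searches plus vertex deletions), that it is applied only a polynomial number of times (once per pair), and that \citet{Uchizawa2013}'s algorithm for \probstRvc on outerplanar graphs is then invoked a polynomial number of times. Since a polynomial number of polynomial-time subroutine calls yields a polynomial-time algorithm, the bound goes through, and the corollary follows.
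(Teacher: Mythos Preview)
Your proposal is correct and follows essentially the same approach as the paper: apply the preceding lemma (reducing \probstSrvc to \probstRvc on cacti) to every vertex pair, then invoke the polynomial-time algorithm of~\citet{Uchizawa2013} for \probstRvc on outerplanar graphs. The paper's own justification is the single sentence ``By applying the above reduction to each pair of vertices, we obtain the following,'' which your proposal unpacks in more detail but with identical content.
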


\subsection{Consequences for parameterized algorithms}
\label{sec:para_consequences}
It is known that both \probRc and \probRvc remain $\NP$-complete for graphs of bounded diameter. However, \probSrc is in $\XP$ parameterized by the diameter on the input graph~\citep{Lauri2015}. Indeed, by the same argument as in~\citep[Theorem~11]{Lauri2015}, we establish a similar result for the strong vertex variant.
\begin{observation}
\label{obs:srvc_diam_xp}
\probSrvc is in $\XP$ parameterized by the diameter of the input graph.
\end{observation}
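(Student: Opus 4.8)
The plan is to mimic the brute-force enumeration used for \probSrc in~\citep[Theorem~11]{Lauri2015}, replacing the requirement ``the edges of the path are colored distinctly'' by ``the internal vertices of the path are colored distinctly''. Let $(G,\psi)$ be the input with $\diam(G) = d$, and recall that $G$ is strongly rainbow vertex connected under $\psi$ precisely when every ordered pair of distinct vertices $(s,t)$ is joined by a vertex rainbow shortest path. Since any shortest $s$-$t$ path has length $d(s,t) \leq d$, it has at most $d-1$ internal vertices, and such a path is completely described by the ordered tuple of its internal vertices.

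First I would precompute all pairwise distances by running breadth-first search from every vertex, in time $O(n(n+m))$. Then, for each ordered pair $(s,t)$ of distinct vertices, I would enumerate every ordered tuple $(w_1,\dots,w_k)$ of pairwise distinct vertices of $G$ with $0 \leq k \leq d-1$; there are at most $\sum_{k=0}^{d-1} n^k = O(n^{d-1})$ such tuples. For each tuple I check, in polynomial time, that (i)~$s\,w_1\,w_2\cdots w_k\,t$ is a path in $G$ (consecutive vertices are adjacent), (ii)~$k+1 = d(s,t)$, so that this path is a shortest $s$-$t$ path, and (iii)~the colors $\psi(w_1),\dots,\psi(w_k)$ are pairwise distinct. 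Call $(s,t)$ \emph{good} if some tuple passes all three checks, and output \textsc{yes} if and only if every ordered pair is good. Correctness is immediate from the characterisation above.

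For the running time, there are $O(n^2)$ pairs, $O(n^{d-1})$ candidate tuples per pair, and $\mathrm{poly}(n)$ work per tuple, so the algorithm runs in time $n^{O(d)}$, which is $\XP$ parameterized by the diameter. There is no genuine obstacle here: the only points to be careful about are that the enumeration ranges over \emph{ordered} tuples (a path traverses its vertices in a fixed order) and that shortestness is decided against the precomputed distances rather than re-derived on the fly; both are handled above, and a sharper analysis is unnecessary for membership in $\XP$. This is precisely why the statement is recorded merely as an observation.
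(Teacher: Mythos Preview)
Your proposal is correct and matches the paper's intent: the paper simply invokes the same brute-force enumeration argument from~\citep[Theorem~11]{Lauri2015}, which is precisely the $n^{O(d)}$ enumeration of candidate shortest paths you spell out. Nothing more is needed for this observation.
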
 
This implies \probSrvc is in $\XP$ for several other structural parameters including domination number, independence number, minimum clique cover, distance to cograph, distance to cluster, distance to co-cluster, distance to clique, and vertex cover. We refer the reader to~\citet{Komusiewicz2012} for a visualization of the relationships of many graph parameters. In fact, it can be observed the diameter of any split graph is at most three. Thus, we obtain the following for both strong variants of the problem.
\begin{corollary}
\label{cor:src_srvc_split_poly}
Both \probSrc and \probSrvc are solvable in polynomial time when restricted to the class of split graphs.
\end{corollary}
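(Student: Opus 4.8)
The plan is to reduce the corollary to the two $\XP$ results already available: Observation~\ref{obs:srvc_diam_xp} for \probSrvc, and the analogous statement for \probSrc from~\citep{Lauri2015} (the one referenced just before Observation~\ref{obs:srvc_diam_xp}). The only genuinely new ingredient needed is a bound on the diameter of split graphs; once that is in place, instantiating the parameter of an $\XP$-time algorithm by a constant immediately yields a polynomial-time algorithm.

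First I would show that every connected split graph $G$ satisfies $\diam(G) \le 3$. Fix a partition $V(G) = Q \cup S$, where $Q$ induces a clique and $S$ is an independent set, and take two distinct vertices $u,v$. If $u,v \in Q$, they are adjacent, so $d(u,v) = 1$. If exactly one of them, say $v$, lies in $S$, then since $G$ is connected $v$ has a neighbor, which must lie in $Q$ because $S$ is independent; moving from $u$ to that neighbor inside the clique and then to $v$ gives $d(u,v) \le 2$. If both $u,v \in S$, pick neighbors $u' \in Q$ of $u$ and $v' \in Q$ of $v$ (these exist by connectivity and lie in $Q$ as before); then $u, u', v', v$ is a walk of length at most $3$, so $d(u,v) \le 3$. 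Hence $\diam(G) \le 3$.

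Next I would simply invoke the two parameterized algorithms. By Observation~\ref{obs:srvc_diam_xp}, \probSrvc can be decided in time $n^{f(\diam(G))}$ for some computable function $f$; restricted to split graphs, $\diam(G) \le 3$, so this runs in time $n^{f(3)} = n^{O(1)}$, i.e., in polynomial time. The identical reasoning, applied to the $\XP$ algorithm for \probSrc parameterized by diameter from~\citep{Lauri2015}, gives a polynomial-time algorithm for \probSrc on split graphs.

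I do not expect a real obstacle here: the diameter bound for split graphs is essentially folklore, and the remaining step is just the observation that an $\XP$ algorithm restricted to a class of bounded parameter is a polynomial-time algorithm. The only point requiring a little care is the appeal to connectivity to guarantee that every vertex in $S$ has a neighbor in $Q$; this is automatic since both \probSrc and \probSrvc take a connected graph as input.
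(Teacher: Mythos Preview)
Your proposal is correct and matches the paper's own argument essentially verbatim: the paper simply notes that every split graph has diameter at most three and then invokes the $\XP$-in-diameter results (Observation~\ref{obs:srvc_diam_xp} for \probSrvc and the analogous result from~\citep{Lauri2015} for \probSrc). Your write-up merely spells out the folklore diameter bound in more detail than the paper does.
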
  

It follows from the work of~\citet{Uchizawa2013} that \probRvc is $\NP$-complete for graphs of bounded treewidth. The pathwidth of an interval graph $G$ is $\omega(G)-1$, i.e., one less than the size of the maximum clique in $G$. We can observe the maximum clique in the graph $G^I_\phi$ constructed in Theorem~\ref{thm:rvc_interval} is of size 4. Thus, hardness of both problems for bounded pathwidth graphs follow. Furthermore, we can connect a clique of size at least 5 to $G^I_\phi$, and color each of its vertices with a fresh new color. Thus, we obtain the following.
\begin{theorem}
\label{thm:hardness_pw}
Both \probRvc and \probSrvc remain $\NP$-complete when restricted to the class of graphs with pathwidth $p$, for every $p \geq 3$.
\end{theorem}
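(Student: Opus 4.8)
The plan is to reduce from the hardness already obtained on interval graphs and then pad the instance with one large clique. Recall from Theorem~\ref{thm:rvc_interval} that \probRvc is $\NP$-complete on the vertex-colored interval graph $G^I_\phi$, and that $\omega(G^I_\phi)=4$, so $\pw(G^I_\phi)=\omega(G^I_\phi)-1=3$; this already settles $p=3$. For $p\geq 4$, take a fresh clique $K$ on $p+1$ vertices $k_1,\ldots,k_{p+1}$, join it to $G^I_\phi$ by the single edge $(s_0,k_1)$, where $s_0$ is the degree-one endpoint of the path $s_0 s_1\cdots s_m$ (a vertex of $U$, hence the unique vertex of its color in $G^I_\phi$), and give each vertex of $K$ its own new color not used anywhere else. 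Let $G$ be the resulting vertex-colored graph; the reduction is plainly polynomial. Running the very same construction on the proper interval graph of Theorem~\ref{thm:srvc_interval_npc} (also interval with clique number $4$) will yield the statement for \probSrvc.

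First I would check the width. Attaching a clique at a single vertex to an interval graph yields an interval graph: extend an interval representation of $G^I_\phi$ by placing the $p+1$ intervals of $K$ pairwise overlapping to the right of everything else, with the interval of $k_1$ also meeting the right tip of the interval of $s_0$. Then $\pw(G)=\omega(G)-1=\max\{\omega(G^I_\phi),p+1\}-1=p$, where $p\geq 3$ is exactly what is needed for the clique to dominate the clique number. (Equivalently, append the single bag $\{s_0\}\cup V(K)$ to a width-$3$ path decomposition of $G^I_\phi$ whose last bag contains $s_0$.)

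Next I would show that $G$ is (strongly) rainbow vertex connected under its coloring if and only if $G^I_\phi$ is, so that correctness follows from Theorems~\ref{thm:rvc_interval} and~\ref{thm:srvc_interval_npc}. The edge $(s_0,k_1)$ is a bridge with $K$ pendant at $s_0$. For one direction, any path in $G$ between two vertices of $G^I_\phi$ that touched $K$ would have to traverse this bridge twice, so every such path --- in particular every shortest one --- stays inside $G^I_\phi$, and distances within $G^I_\phi$ are unchanged; hence a witness for $G$ restricts to a witness for $G^I_\phi$. For the converse, assume $G^I_\phi$ has the property: pairs inside $G^I_\phi$ and pairs inside $K$ are handled trivially, and for $u\in V(K)$ and $v\in V(G^I_\phi)$ one concatenates the path of length at most $2$ from $u$ to $s_0$ through $K\cup\{s_0\}$ with a (shortest) vertex-rainbow $s_0$-$v$ path in $G^I_\phi$. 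The only internal vertices this adds are $k_1$ and $s_0$, whose colors are unique in $G$ and therefore cannot clash with the remainder of the path; and since $s_0$ is a cut vertex, this concatenation is a shortest $u$-$v$ path whenever the $s_0$-$v$ part is. Membership in $\NP$ was already observed in Subsection~\ref{sec:overview}, so both problems are $\NP$-complete on graphs of pathwidth exactly $p$, for every $p\geq 3$.

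The construction is routine; the only point demanding care is the converse direction above, i.e.\ making sure the padding does not create a new vertex pair that fails the connectivity property. This is precisely why the attachment vertex $s_0$ is chosen to carry a globally unique color and why $K$ is given only fresh colors --- so that the two extra internal vertices $k_1,s_0$ introduced on the new paths can never repeat a color. The only other thing to verify is that $\omega(G)=\max\{\omega(G^I_\phi),p+1\}$ is actually attained, which is what forces the hypothesis $p\geq 3$.
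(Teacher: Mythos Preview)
Your proposal is correct and follows essentially the same approach as the paper: observe that the interval constructions of Theorems~\ref{thm:rvc_interval} and~\ref{thm:srvc_interval_npc} have clique number~$4$ and hence pathwidth~$3$, then pad with a fresh-colored clique of size~$p+1$ to reach any larger pathwidth. The paper's own argument is considerably terser (it merely says to ``connect a clique of size at least~$5$'' without specifying where or verifying correctness), so your explicit choice of the degree-one vertex~$s_0$ as attachment point and your verification that the padding neither changes the pathwidth computation nor introduces a bad vertex pair are welcome elaborations of the same idea rather than a different route.
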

Recall the bandwidth of a graph $G$ is one less than the maximum clique size of any proper interval supergraph of $G$, chosen to minimize its clique number. In Theorem~\ref{thm:srvc_interval_npc}, the graph constructed is already a proper interval graph. Moreover, we can verify its maximum clique size is 4. But we started the construction from the graph built in Theorem~\ref{thm:rvc_interval}. Thus, we can connect a clique of any size colored with fresh new colors to either one of the graphs, and observe the following.
\begin{theorem}
\label{thm:hardness_bw}
Both \probRvc and \probSrvc remain $\NP$-complete when restricted to the class of graphs with bandwidth $b$, for every $b \geq 3$.
\end{theorem}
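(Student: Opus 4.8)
The plan is to reuse the two interval-type constructions already built, determine their bandwidth, and then bootstrap to arbitrary $b \ge 3$ by gluing on a clique coloured with fresh colours. For \probSrvc I would take the graph $G^I_\phi$ of Theorem~\ref{thm:srvc_interval_npc}: it is a proper interval graph, and we have checked that $\omega(G^I_\phi) = 4$. Since $G^I_\phi$ is itself a proper interval graph it certifies $\bw(G^I_\phi) \le \omega(G^I_\phi) - 1 = 3$; on the other hand $\bw(G^I_\phi) \ge \pw(G^I_\phi) = \omega(G^I_\phi) - 1 = 3$ because, being interval, its pathwidth equals its clique number minus one. Hence $\bw(G^I_\phi) = 3$, and Theorem~\ref{thm:srvc_interval_npc} already gives $\NP$-hardness of \probSrvc on graphs of bandwidth $3$.

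For \probRvc the proper-interval modification of Theorem~\ref{thm:srvc_interval_npc} is not available (as noted there), so I would use $G^I_\phi$ of Theorem~\ref{thm:rvc_interval} instead. This graph is interval with $\omega = 4$, so $\bw \ge \pw = 3$, and I would argue $\bw \le 3$ by exhibiting a width-$3$ linear arrangement directly: one checks that each variable gadget $X^I_i$ and each clause gadget $C^I_j$ admits a linear arrangement of bandwidth $3$ in which its two terminal vertices (the degree-two boundary vertices $a_i, b_i$, resp.\ $p_j, q'_j$) occupy the two extreme positions, and since $G^I_\phi$ is assembled from these gadgets strung along the path-like spine $s_0, \dots, s_m, a_1, \dots, b_n, p_1, \dots, q'_m, t', t$ with each linking vertex $d_i, f_j$ of degree two, concatenating the per-gadget arrangements yields a width-$3$ arrangement of the whole graph in which the leaf $t$ sits at one end. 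Thus $\bw(G^I_\phi) = 3$, giving $\NP$-hardness of \probRvc on graphs of bandwidth $3$.

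To reach a general $b \ge 3$ I would attach a clique with fresh colours. Let $G$ be the relevant construction above. Add a clique $K_{b+1}$ on new vertices, select one of them $z$, add the single edge $(t, z)$, and colour $z$ together with all vertices of $K_{b+1}$ with pairwise distinct fresh colours; call the result $G'$. Both $z$ and $t$ are cut vertices, so distances inside $G$ are unchanged and every path of $G'$ joining two vertices of $G$ stays inside $G$; hence pairs inside $G$ are (strongly) rainbow vertex connected in $G'$ exactly as in $G$. A pair made of $u \in V(G)$ and a clique vertex $w$ is handled by taking a (shortest) vertex rainbow $u$-$t$ path in $G$ --- which exists whenever $G$ is (strongly) rainbow vertex connected --- and extending it by $t$-$z$ or $t$-$z$-$w$; the only new internal vertices $t$ and $z$ carry fresh colours, so the extended path is again (a shortest) vertex rainbow path. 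Pairs within $K_{b+1} \cup \{z\}$ are adjacent. So $G'$ is (strongly) rainbow vertex connected if and only if $\phi$ is satisfiable. For the bandwidth, $K_{b+1} \subseteq G'$ gives $\bw(G') \ge \bw(K_{b+1}) = b$; conversely, appending the $b+1$ vertices of $K_{b+1}$, with $z$ first, to the end of the width-$3$ arrangement of $G$ (which ends at $t$) produces an arrangement in which the only edges not already of length at most $3$ are $(t, z)$, of length $1$, and the edges of $K_{b+1}$, of length at most $b$; its width is $\max\{3, 1, b\} = b$. Hence $\bw(G') = b$. Since both \probRvc and \probSrvc are in $\NP$ (a family of coloured paths, one per vertex pair, is a certificate; see Section~\ref{sec:overview}), both problems are $\NP$-complete on graphs of bandwidth $b$ for every $b \ge 3$.

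The step I expect to be the main obstacle is establishing $\bw(G^I_\phi) = 3$ for the \probRvc construction of Theorem~\ref{thm:rvc_interval}: unlike the proper interval graph of Theorem~\ref{thm:srvc_interval_npc}, it is not handed to us with an interval model of clique number $4$, so one has to actually produce the width-$3$ arrangements of the variable and clause gadgets --- and because each gadget contains vertices of degree $6$, the constraint that a degree-$6$ vertex see all of its neighbours within distance $3$ makes these arrangements almost forced, so the check is delicate --- and then verify that they glue along the spine without any edge exceeding length $3$. The remaining ingredients --- the cut-vertex argument for the clique attachment, the bandwidth accounting, and $\NP$ membership --- are routine.
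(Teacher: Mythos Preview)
Your proposal is correct and follows the same overall route as the paper: reuse the interval constructions of Theorems~\ref{thm:rvc_interval} and~\ref{thm:srvc_interval_npc}, pin their bandwidth at~$3$, and then glue on a fresh-coloured clique to reach any $b\ge 3$. The clique-attachment step and the cut-vertex/arrangement bookkeeping you spell out are exactly what the paper does, only the paper states it in one line.

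The one place you diverge is in certifying $\bw=3$ for the \probRvc graph of Theorem~\ref{thm:rvc_interval}. The paper does not build a layout; it simply remarks that the proper interval graph of Theorem~\ref{thm:srvc_interval_npc} (which visibly has $\omega=4$, hence bandwidth~$3$) ``started from'' the Theorem~\ref{thm:rvc_interval} graph, and leaves it at that --- effectively asking the reader to accept that the same proper-interval-with-$\omega=4$ envelope works for the unmodified gadget as well. You instead propose to exhibit a width-$3$ linear arrangement of each gadget with its terminals at the ends and concatenate along the spine. That is more work but also more honest, and it goes through: for instance the ordering $a_i,v_{i,20},v_{i,2},v_{i,19},v_{i,3},v_{i,4},v_{i,18},v_{i,5},v_{i,6},v_{i,17},v_{i,16},v_{i,7},v_{i,8},v_{i,15},v_{i,14},v_{i,9},v_{i,10},v_{i,13},v_{i,12},b_i$ already realises width~$3$ on $X^I_i$ with $a_i,b_i$ at the extremes, so your ``main obstacle'' is not actually an obstacle.
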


Finally, one can observe Theorem~\ref{thm:hardness_pw} also implies hardness for bounded treewidth graphs. Thus, it is interesting to consider a parameter stronger than pathwidth. Indeed, tree-depth is an upper bound on the pathwidth of a graph. It was shown by~\citet{Nesetril2008} that the length of a longest path in an undirected graph $G$ is upper bounded by $2\td(G) - 2$. Using this fact in combination with the argument given in~\citep[Theorem~11]{Lauri2015}, we have the following.
\begin{observation}
\label{obs:treedepth_xp}
Both \probRvc and \probSrvc are in $\XP$ parameterized by the tree-depth of the input graph.
\end{observation}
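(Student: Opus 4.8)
The plan is to combine the structural bound of~\citet{Nesetril2008} with a brute-force enumeration of short paths, essentially repeating the argument already used for the diameter parameter in Observation~\ref{obs:srvc_diam_xp}. Write $k := \td(G)$; we may assume $k$ is known, since it can be determined within the time budget claimed below (and in any case an XP bound in terms of $k$ is all that is required). By the cited result, every path of $G$ has length at most $2k-2$, hence at most $2k-1$ vertices, hence at most $2k-3$ internal vertices. The crucial observation is that any vertex rainbow path — and, a fortiori, any vertex rainbow \emph{shortest} path — witnessing the connectivity of a pair $u,v$ is itself a path, and so it too has at most $2k-3$ internal vertices; it will therefore be discovered by an exhaustive enumeration over bounded-length vertex sequences.

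For \probRvc I would proceed as follows. For each of the $O(n^2)$ pairs $u,v \in V(G)$, enumerate every ordered tuple $(w_1,\dots,w_\ell)$ of pairwise distinct vertices of $V(G)\setminus\{u,v\}$ with $0 \le \ell \le 2k-3$; there are at most $\sum_{\ell=0}^{2k-3} n^\ell = n^{O(k)}$ such tuples. For each tuple, check in polynomial time whether $u\,w_1\,w_2\cdots w_\ell\,v$ is a path of $G$ (consecutive vertices adjacent) and whether $\psi(w_1),\dots,\psi(w_\ell)$ are pairwise distinct. Declare $u,v$ rainbow vertex connected if and only if some tuple passes both tests; correctness is immediate from the length bound above, and the total running time is $n^{O(k)}$, so \probRvc is in $\XP$ parameterized by tree-depth.

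For \probSrvc I would additionally precompute all pairwise distances by running a breadth-first search from each vertex, and then run the same enumeration but, for the pair $u,v$, restricted to tuples of length exactly $d(u,v)-1$ that moreover satisfy $d(u,w_i)=i$ for every $i$, so that $u\,w_1\cdots w_\ell\,v$ is forced to be a shortest path. The length bound from~\citep{Nesetril2008} applies verbatim, so the running time remains $n^{O(k)}$, and \probSrvc is likewise in $\XP$ parameterized by tree-depth.

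There is no genuine obstacle here: the entire content of the statement is the longest-path bound of~\citep{Nesetril2008}, and everything else is the bounded-length path enumeration already employed for the diameter parameter, which is precisely why this is stated as an observation rather than a theorem. The only point deserving a word of care is that the enumeration threshold must be made to depend on $k$ (namely $2k-3$) rather than on the a priori unknown exact length of a longest path of $G$; by the cited bound this threshold is always large enough, and handling it costs nothing asymptotically.
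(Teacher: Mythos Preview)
Your proposal is correct and follows exactly the approach the paper intends: combine the longest-path bound $2\td(G)-2$ of~\citet{Nesetril2008} with the brute-force enumeration of bounded-length paths from~\citep[Theorem~11]{Lauri2015}. The paper merely cites that argument, while you spell it out, but the content is identical.
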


The previous observation raises a natural question: is either problem FPT for tree-depth? Similarly, in the light of Observation~\ref{obs:srvc_diam_xp}, it is interesting to ask whether \probSrvc or \probSrc is FPT parameterized by the diameter of the input graph. In the following, we remark these questions have a positive answer.

\citet{Uchizawa2013} gave a dynamic programming algorithm for solving all four problems in $2^k n^{O(1)}$ time and exponential space, where $k$ is the number of colors used in the coloring of the input graph. Their algorithm decides whether there is a rainbow walk from an arbitrary vertex $s$ to each vertex $v \in V \setminus \{ s \}$. The crucial property is that any rainbow $s$-$v$ walk is of length at most $k$, for otherwise a color would have to repeat. We remark that their algorithm is also an FPT algorithm for any parameter that bounds the longest (shortest) path length. Indeed, if the diameter is bounded, this gives us an upper bound on the length of a walk to compute in the strong variant. The observation is similar for tree-depth.

\begin{theorem}
\label{thm:td_fpt}
All problems \probRc, \probSrc, \probRvc, and \probSrvc are $\FPT$ parameterized by the tree-depth of the input graph.
\end{theorem}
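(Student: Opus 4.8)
The plan is to leverage the algorithm of~\citet{Uchizawa2013} essentially as a black box, observing that its running time is actually governed not by the number of colors but by a bound on the length of the relevant walks. Recall their dynamic program fixes a source $s$ and, for each vertex $v$, seeks a rainbow $s$-$v$ walk; the key structural fact is that a rainbow walk has length at most $k$ (the number of colors), since otherwise a color repeats. The state of the DP is a pair consisting of a vertex and a \emph{set} of colors already used along a partial walk, so the running time is $2^{k} n^{O(1)}$. The first step is to re-examine this argument and note that what the DP really needs is an \emph{a priori} upper bound $L$ on the length of any walk it must consider: if $L$ is known, one may instead track the set of \emph{vertices} used so far along walks of length at most $L$, giving a running time of the form $f(L)\, n^{O(1)}$ (and this suffices even when $L$ may exceed $k$, though in our setting it will not). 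For the non-strong variants this bound is immediate, $L \le k$; for the strong variants one additionally needs $L$ to bound the length of a \emph{shortest} path between any two vertices, i.e.\ one needs $L \ge \diam(G)$.

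The second step is to supply, in each of the four cases, an explicit bound on $L$ in terms of $\td(G)$. For \probRvc and \probRc this is already handled by Observation~\ref{obs:treedepth_xp}: a rainbow (vertex or edge) path or walk cannot be longer than the number of colors, but more to the point, by the result of~\citet{Nesetril2008} every path in $G$ — and in particular every shortest path and every induced rainbow subpath — has length at most $2\td(G)-2$, so we may take $L = 2\td(G)-2$. For \probSrvc and \probSrc the same bound applies to shortest paths (a shortest path is in particular a path), so again $L \le 2\td(G)-2$ suffices to cap the length of the walks the DP must explore. Thus in all four cases we run the Uchizawa et al.\ dynamic program, but truncated to walks of length at most $L = 2\td(G)-2$, which yields an algorithm running in time $2^{O(\td(G))} n^{O(1)}$, since the number of relevant states is at most $n \cdot \binom{n}{\le L} \cdot \cdots$ — more carefully, one restores the desired $f(\td)\, n^{O(1)}$ shape by observing that along any walk of length $\le L$ the DP need only remember which of the at most $2L+1$ colors/vertices on the current walk have been seen, so the state count is $n \cdot 2^{O(L)} = 2^{O(\td(G))} n^{O(1)}$.

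The third and final step is bookkeeping: iterate over all choices of source $s$ (an extra factor of $n$) and, for the strong variants, verify for each pair $(s,v)$ that the rainbow walk found actually has length exactly $d(s,v)$ — which is automatic once we only enumerate walks up to length $L \ge \diam(G)$ and take the shortest rainbow one, since $d(s,v) \le \diam(G) \le L$. Accepting iff every pair is connected by such a path gives the decision procedure. I expect the only genuinely delicate point to be the clean statement of the modified DP so that its running time is honestly $f(\td(G))\, n^{O(1)}$ rather than merely $n^{f(\td(G))}$ (i.e.\ distinguishing this Theorem from Observation~\ref{obs:treedepth_xp}); the resolution is exactly the remark quoted in the paragraph preceding the theorem, that the original algorithm is \emph{already} an FPT algorithm for any parameter bounding the longest (shortest) path, so one simply instantiates that remark with the Ne\v{s}et\v{r}il--Ossona de Mendez bound $2\td(G)-2$.
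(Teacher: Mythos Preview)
Your approach is essentially the paper's: the paragraph preceding the theorem already remarks that the Uchizawa et al.\ algorithm is FPT in any parameter bounding the longest (shortest) path length, and the proof simply instantiates this with the Ne\v{s}et\v{r}il--Ossona de Mendez bound $2\td(G)-2$ on the longest path. Your final paragraph is precisely this argument; the intermediate sketch of a $2^{O(L)}n^{O(1)}$ state count goes further than the paper (which offers no such analysis) and is a little loose as written---tracking the set of vertices or of used colors along a length-$L$ walk naively gives $n^{O(L)}$ rather than $2^{O(L)}n^{O(1)}$ states---but that extra justification is not needed to match the paper's proof.
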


\begin{theorem}
\label{thm:diam_fpt}
Both \probSrc and \probSrvc are $\FPT$ parameterized by the diameter of the input graph.
\end{theorem}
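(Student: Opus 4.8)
The plan is to combine the layer-by-layer dynamic program underlying Observation~\ref{obs:srvc_diam_xp} with \emph{color coding} \citep{Cygan2015} in order to close the gap from $\XP$ to $\FPT$. Write $d=\diam(G)$ and treat \probSrvc first (the argument for \probSrc is the edge-colored analogue, sketched at the end). Since $G$ is strongly rainbow vertex connected if and only if every ordered pair $(s,v)$ admits a vertex rainbow shortest $s$--$v$ path, it suffices, for each fixed source $s$, to decide for \emph{all} $v$ simultaneously whether such a path exists, within time $f(d)\cdot n^{O(1)}$. Run a BFS from $s$ to obtain the distance layers $L_0=\{s\},L_1,L_2,\dots$; every shortest $s$--$v$ path uses exactly one vertex from each layer it meets, and since $d(s,v)\le d$ it has at most $d-1$ internal vertices. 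In particular a vertex rainbow shortest $s$--$v$ path uses at most $q:=d-1$ distinct colors. This is the crucial point: although $\psi$ may use arbitrarily many colors, only $O(d)$ of them can ever occur on a path we care about, so the relevant color universe is effectively of size $O(d)$ and color coding applies.

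Concretely, fix an $(|C|,q)$-perfect hash family $\mathcal H$ of functions $C\to[q]$ of size $2^{O(q)}\log|C|=2^{O(d)}\log n$ \citep{Cygan2015}. For each $h\in\mathcal H$ let $\psi_h=h\circ\psi$ be the induced $[q]$-coloring, and run the following DP on the BFS layers: for every vertex $v$ compute the family $\mathcal F_v\subseteq 2^{[q]}$ of those $S$ for which some shortest $s$--$v$ path has pairwise distinct $\psi_h$-colors on its internal vertices, with set of those colors exactly $S$. Set $\mathcal F_s=\{\emptyset\}$; for $v\in L_i$ with $i\ge1$, obtain $\mathcal F_v$ from the $\mathcal F_u$ over neighbors $u\in L_{i-1}$ of $v$, by inserting $\psi_h(u)$ into each $S\in\mathcal F_u$ whenever $u\ne s$ and $\psi_h(u)\notin S$ (so $\mathcal F_v=\{\emptyset\}$ for $v\in L_1$). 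Each $\mathcal F_v$ has at most $2^{q}$ members and the layers are processed in order, so this costs $2^{q}\cdot n^{O(1)}=2^{O(d)}n^{O(1)}$ per pair $(s,h)$; declare $(s,v)$ good under $h$ iff $\mathcal F_v\ne\emptyset$.

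For soundness, $\psi_h(x)\ne\psi_h(y)$ forces $\psi(x)\ne\psi(y)$, so any path rainbow under some $\psi_h$ is already vertex rainbow under $\psi$; hence ``good under some $h\in\mathcal H$'' never yields a false positive. For completeness, if a vertex rainbow shortest $s$--$v$ path $P$ exists, its internal vertices use a color set $C_P$ with $|C_P|\le q$, so by the defining property of $\mathcal H$ some $h\in\mathcal H$ is injective on $C_P$; under that $\psi_h$ the path $P$ is colorful and the DP records it. Running over all sources $s$ and all $h\in\mathcal H$ gives a deterministic algorithm of total running time $n\cdot|\mathcal H|\cdot 2^{O(d)}n^{O(1)}=2^{O(d)}n^{O(1)}$, which is $\FPT$ in $d$. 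For \probSrc one repeats everything verbatim with edges replacing internal vertices: a shortest $s$--$v$ path has at most $d$ edges, so one takes $q=d$, hashes the edge-coloring $\zeta$ to $[q]$, and lets the DP track subsets of the $q$ hashed colors of the edges used on the path.

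The only genuine difficulty is exactly this $\XP$-to-$\FPT$ step. The naive layered DP of Observation~\ref{obs:srvc_diam_xp} keeps subfamilies of $2^{C}$, and there are instances (e.g.\ a ``complete multipartite path'' of $d-1$ large layers with all distinct colors) where $|C|^{\Omega(d)}$ subsets are reachable, so one really has to exploit that at most $O(d)$ colors appear on a shortest path and compress the palette via color coding. Everything else --- the BFS layering faithfully capturing shortest paths, the bookkeeping of which vertex becomes internal when a partial path is extended, and the derandomization through a perfect hash family --- is routine.
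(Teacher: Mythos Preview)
Your argument is correct. The paper itself gives essentially no proof: the paragraph preceding the theorem just asserts that the $2^{k}n^{O(1)}$ dynamic program of Uchizawa et al.\ (with $k$ the number of colors in the input) ``is also an FPT algorithm for any parameter that bounds the longest (shortest) path length,'' noting that the diameter bounds the walk length one has to compute in the strong variant. Taken at face value that only recovers the $\XP$ bound already stated as Observation~\ref{obs:srvc_diam_xp}: bounding the walk length by $d$ merely bounds the \emph{cardinality} of the color subsets the DP must store, which still leaves $\sum_{i\le d}\binom{|C|}{i}=n^{O(d)}$ states rather than $2^{O(d)}$. Your color-coding layer --- hashing $C$ down to a palette $[q]$ with $q=O(d)$ via a perfect hash family, so that the BFS-layered DP only ever manipulates subsets of $[q]$ --- is precisely the device that converts this into an honest $\FPT$ algorithm, and your soundness/completeness check for the hashing is the right one. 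So your route is not just a different presentation of the paper's; it makes explicit the step the paper's one-line justification glosses over.
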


\section{Concluding remarks}

We gave several complexity results for both \probRvc and \probSrvc (see Table~\ref{tbl:hardness_summary}). The goal was to investigate whether the complexity results for the edge variants in~\cite{Lauri2015} could be extended for the vertex variants. As the results in Table~\ref{tbl:hardness_summary} show, it is not a priori obvious how complexity is affected when considering the vertex variants for a particular graph class. This is showcased by e.g., block graphs. In the process, we obtained further negative results for the edge variants, and positive parameterized results for all four problems. 

Previously, it was shown in~\cite{Uchizawa2013} that \probRvc is $\NP$-complete for series-parallel graphs. We remark the same is true for \probSrvc. Indeed, we follow precisely the reduction given in~\cite{Uchizawa2013}, but reduce from \probSrc instead of \probRc.

From a parameterized perspective, it seems the strong variants are more tractable. Moreover, Table~\ref{tbl:hardness_summary} suggests the vertex variants are never harder than the edge variants. Is there a graph class for which say \probRvc is hard, but \probRc easy? It is also interesting to consider the complexity of the weak problem variants for split graphs. In particular, the vertex variant is trivial for split graphs of diameter~2, but what about split graphs of diameter~3?



\bibliographystyle{model1-num-names}
\bibliography{bibliography}

\section*{Appendix}
As mentioned in Subsection~\ref{sec:overview}, all of our reductions from \occsat assume each clause of the input formula has exactly three literals. For completeness, we present here clause gadgets corresponding to clauses of size two for each graph class considered.

The clause gadgets for different graph classes are shown in Figure~\ref{fig_clause_gadgets}. The first column denotes the graph class. The second column shows a clause gadget corresponding to a clause containing two literals. See the respective theorems for an explanation of the colors appearing on the vertices.

\begin{figure}[b]
\includegraphics[keepaspectratio]{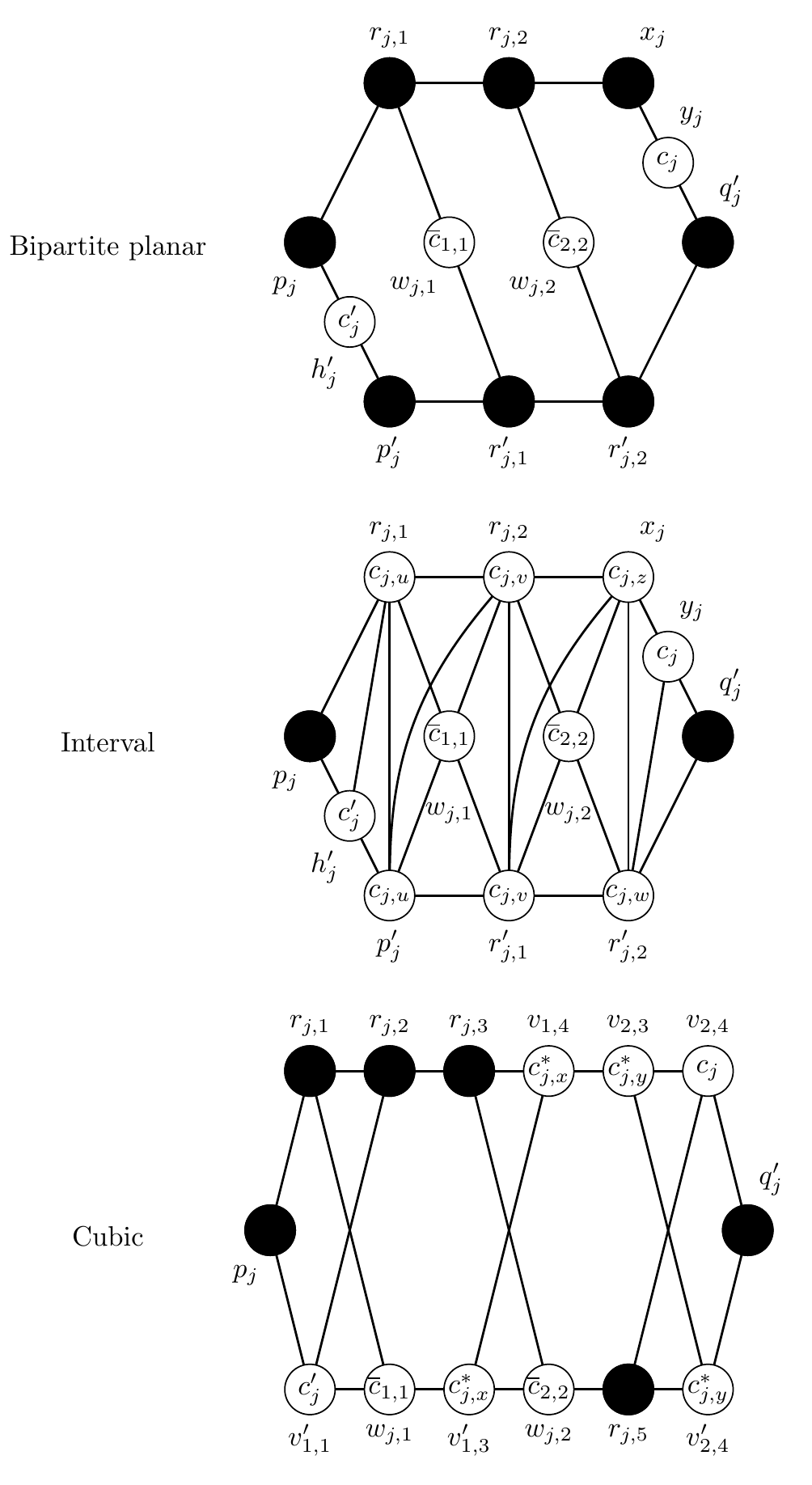}
\caption{Clause gadgets corresponding to clauses of size two for different graph classes.}
\label{fig_clause_gadgets}
\end{figure}

\end{document}